\def\BibTeX{{\rm B\kern-.05em{\sc i\kern-.025em b}\kern-.08em
    T\kern-.1667em\lower.7ex\hbox{E}\kern-.125emX}}
\setlist{topsep=2pt, itemsep=1pt}
\newcommand{\mc}{\mathcal}
\newcommand{\mbf}{\mathbf}
\newcommand{\msf}{\mathsf}
\newcommand{\ignore}[1]{}
\newcommand{\protocol}{\Pi_\mathsf{fruit}^{\mathtt{H},\mathsf{d}}(p_b,p_f,r)}
\newcommand{\block}{\mathbf{B}}
\newcommand{\fruit}{\mathbf{f}}
\newcommand{\pb}{p_b}
\newcommand{\pf}{p_f}
\newcommand{\rf}{R_f}
\newcommand{\recency}{r}
\newcommand{\nonce}{\eta}
\newcommand{\digest}{\mathsf{d}}
\newcommand{\record}{\mathsf{m}}
\newcommand{\tx}{\mathsf{tx}}
\newcommand{\previous}{h_{-1}}
\newcommand{\chain}{\mathit{chain}}
\newcommand{\oracle}{\mathtt{H}}
\newcommand{\exec}[1]{\mathcal{E}_{\mathcal{Z},\mathcal{A}_{#1}}}
\newcommand{\honexec}{\mathcal{E}_{\mathcal{Z},\mathcal{H}_\mathbf{C}}}
\newcommand{\umax}{U^\mathsf{max}_\mathbf{C}}
\newcommand{\umin}{U^\mathsf{min}_\mathbf{C}}
\newcommand{\rounds}{N}
\newcommand{\upper}{t}
\newcommand{\secpar}{\kappa}
\newcommand{\party}[1]{P_#1}
\newcommand{\honest}{\mathbf{H}}
\newcommand{\corrupt}{\mathbf{C}}
\newcommand{\innocent}{\mathcal{H}_{\corrupt}}
\newcommand{\cost}{C}
\newcommand{\reward}{R_f}
\newcommand{\protocolevp}{\Pi_\mathsf{fruit}^{\mathcal{O}_{\msf{lc}/\msf{fs}/\msf{tx}/\msf{ro}}}(p_b,p_f)}
\newcommand{\negl}{\mathsf{negl}}
\newcommand{\single}{\Pi_\mathsf{single}^{\mathcal{O}_{\msf{lc}/\msf{fs}/\msf{tx}/\msf{ro}/\msf{ltx}}}(p_b,p_f)}
\newenvironment{thickframe}{\begin{center}\begin{boxedminipage}{0.98\textwidth}}{\end{boxedminipage}\end{center}}
\newtheorem{theorem}{Theorem}
\newtheorem{definition}{Definition}
\theoremstyle{definition}
\newtheorem{remark}{Remark}
\theoremstyle{theorem}
\author[1]{Aikaterini-Panagiota Stouka}
\author[2]{Thomas Zacharias}
\affil[1]{Nethermind, UK}
\affil[2]{The University of Edinburgh, UK}
 \date{}
\begin{document}
\newtheorem{claim}{Claim}[theorem]

\title{On the (De)centralization of FruitChains}

\maketitle

\begin{abstract}
One of the most important features of blockchain protocols is
decentralization, as their main contribution is that they formulate a distributed ledger that will be maintained and extended without the need of a trusted party. Bitcoin has been criticized for its tendency to centralization, as very few pools control the majority of the hashing power. Pass et al. proposed FruitChain [PODC 17] and claimed that this blockchain protocol mitigates the formation of pools by reducing the variance of the rewards in the same way as mining pools, but in a fully decentralized fashion. Many follow up papers consider that the problem of centralization in Proof-of-Work (PoW) blockchain systems can be solved via lower rewards' variance, and that in FruitChain the formation of pools is unnecessary. 

Contrary to the common perception, in this work, we prove that lower variance of the rewards does not eliminate the tendency of the PoW blockchain protocols to centralization; miners have also other incentives to create large pools, and specifically to share the cost of creating the instance they need to solve the PoW puzzle.

We abstract the procedures of FruitChain as oracles and assign to each of them a cost. Then, we provide a formal definition of a pool in a blockchain
system, and by utilizing the notion of equilibrium with virtual payoffs (EVP) [AFT 21], we prove that there is a completely centralized EVP, where all the parties form a single pool controlled by one party called the pool leader. The pool leader is responsible for creating the instance used for the PoW procedure. To the best of our knowledge, this
is the first work that examines the construction of mining
pools in the FruitChain system. \smallskip

\noindent\textbf{Keywords:} FruitChains, decentralization, incentives, Proof-of-Work, mining pools
\end{abstract}

    \section{Introduction}\label{sec:introduction}
Bitcoin introduced by Nakamoto \cite{nakamoto} is the first established decentralized cryptocurrency. The Bitcoin blockchain protocol formulates a ledger that consists of a chain of blocks that include transactions. This ledger is maintained without the need of a trusted party and is extended by peer to peer nodes called \emph{miners}. Each miner extends the chain when it manages to solve a Proof-of-Work (PoW) puzzle \cite{puzzle,puzzle2,puzzle3,puzzle4} using computational/hashing power. If more than one chains have been formed, the longest of them constitutes the ledger. The economic incentives of the miners to participate in the mining process are (i) the newly-minted rewards they earn when they produce a block that extends the ledger, and (ii) the transaction fees  which constitute the ``tip'' for the miner who includes a transaction in its block.

Although Bitcoin has been characterized as the biggest financial innovation of the fourth industrial revolution \cite{LI2021120383}, it has been criticized for various reasons, such as its vulnerability to \emph{selfish mining} attacks \cite{selfish} and its tendency to centralization \cite{DBLP:journals/corr/abs-1811-08572, gap, DBLP:journals/corr/abs-1904-02368,6824541}. In more detail, when a malicious attacker performs a selfish mining attack, it reduces the fraction of the blocks in the ledger that belong to the honest miners (the miners that follow the Bitcoin protocol). As far as  Bitcoin centralization is concerned, miners are organized into mining pools. Currently, only four pools constitute the majority of the computational power \footnote{\url{https://btc.com/stats/pool.}}. The miners who join pools solve computational puzzles of lower difficulty (partial PoW), they get paid regularly according to the pool rules, and their rewards have lower variance compared to solo mining~\cite{Rosenfeld2011AnalysisOB, 7163020, https://doi.org/10.48550/arxiv.1905.05999, 10.1007/978-3-662-54970-4_28}. 
\par In order to prevent selfish mining attacks and achieve \emph{fairness}, Pass et al.~\cite{fruitchain} propose a blockchain protocol called \emph{FruitChain} that uses the 2-for-1 PoW technique (introduced in \cite{backbone}). According to \cite{fruitchain}, a protocol satisfies fairness  when
with overwhelming probability, in every long enough segment of the ledger, any honest set of parties is guaranteed to hold a fraction of blocks that is very close to its relative computational power. In addition, \cite{fruitchain} states that the FruitChain protocol could reduce the variance of the rewards similarly to mining pools, but in a ``fully decentralized way''. Namely, in FruitChain, the parties can produce via mining either blocks or \emph{fruits}, where fruits have much lower difficulty than blocks and can play the role of partial PoW in mining pools. To this direction, many follow up papers correlate (FruitChain protocol's)  decentralization with reducing the variance of the rewards. Some notable examples of such works are the following: (i)  \cite{236314} recognizes high variance as the main motivation for joining a pool; (ii) \cite{9496180} states that mining pools are unnecessary, because miners can produce fruits in short time; (iii) according to \cite{Huang_2021}, when the parties' rewards are concentrated with high probability to their initial resources, the parties lose their motivation to form mining pools; (iv) \cite{Sarkar2020ANB} states that a distribution of block rewards that is equitable makes the formation of mining pools redundant; (v) \cite{kellerconsensus} states that there is no mean of pool existence in Fruitchain; (vi) according to \cite{10.1145/3449301.3449335}, as the fruit can be mined in a very short period, pools are not necessary. Other works  that correlate the incentives of the parties to form/join a pool with the variance of their rewards are  \cite{Tanniru2021,article, journals/iacr/FitziGKR18, 10.1145/3318041.3355458,9724503,kellerconsensus,https://doi.org/10.48550/arxiv.2104.01918}.\\[2pt]
\indent\emph{Our results.}
In this paper, we revisit the decentralization of the FruitChain protocol \cite{fruitchain} and argue that, contrary to the common perception, \emph{lower variance of the rewards does not eliminate the tendency of PoW blockchain protocols to centralization}. In particular, we focus on another motivation of the miners to form pools, which is to share the cost of creating the \emph{instance} (i.e., the block header that the miners iteratively hash applying different nonce in each iteration) they need to solve the PoW puzzle. By utilizing a notion of equilibrium called \emph{equilibrium with virtual payoffs} (EVP) presented in \cite{EVP}, we prove that in FruitChain, there is a completely centralized EVP, where all the parties form a single pool controlled by a single party (pool leader) responsible for creating the instance used for the PoW task. In more detail, 
\begin{enumerate}[itemsep=2pt]
\item We abstract the procedures of the FruitChain protocol as oracles and we assign a cost to each procedure. 

\item We provide a formal definition of a pool in a blockchain system. We treat the pool as a description of a subset of parties, along with their communication setting and their execution guidelines. Although our definition is generic, we focus on PoW pools where the collective rewards of the pool are shared among its members in ``off-chain'' manner, i.e., not enforced by the underlying blockchain. 

\item We introduce a set of rules of a completely centralized pool that includes all the miners in the FruitChain system. In this single pool, only the pool leader decides which chain constitutes the ledger and the contents of the blocks that will extend this chain, but all the members (including itself) share the cost of this procedure. To the best of our knowledge, this is the first work that examines the construction of mining pools in the FruitChain system.

\item We prove that in the FruitChain system, joining this completely centralized pool and following its rules is an EVP. Note that we are able to prove this statement because the rules of this pool disincentivize the members of the pool from (i) attacking each other and (ii) making the betrayal of the pool more profitable than sticking to the pool and sharing the costs.

\item We discuss possible directions to mitigate the tendency to centralization due to sharing verification costs.
\end{enumerate}
\indent\emph{Other related work}.
In \cite{EVP}, Kiayias et al. present the EVP notion that we use in our results. As far as decentralization is concerned, the authors prove that in Bitcoin, forming a single dictatorial pool is EVP, when the cost of processing transactions is not negligible. In our work, we show that although the FruitChain protocol offers lower variance in the miners' rewards, the centralization problem is not solved. Apart from deploying the EVP framework to study the (de)centralization of another protocol, our formal treatment extends and improves the one of~\cite{EVP} w.r.t. the following aspects:
\begin{itemize}
    \item We take into account not only the cost of processing transactions and the cost of making random oracle queries (as \cite{EVP} does), but also the cost of finding the longest chain and the cost for finding all the fruits whose digest should be included in the instance needed for mining.
    \item In \cite{EVP}, it is assumed that when a pool member ignores the instance of the pool leader and mines on a different instance, the pool leader can detect this deviant behavior and remove the member from the pool. This assumption excludes strategies where the deviant member mines under a different identity (public key) than the one the pool leader has recorded. In our case, we address this by having the pool leader dissolve the pool whenever it detects the aforementioned deviant behavior.
\end{itemize}

In \cite{9230398}, the authors prove that if a ``fair'' reward function is used in their setting, there is no equilibrium with more than one pool. Note that the setting in~\cite{9230398} (i) refers to proof of stake protocols, which means that there is no cost for mining, and (ii) is deterministic, so it cannot  capture the probabilistic nature of  the FruitChain protocol. 
Some other works that consider cost sharing as a possible reason for centralization in blockchain protocols are \cite{breidenbach2021chainlink,Natoli2019DeconstructingBA, 10.1145/3318041.3355470}. Other works related to decentralization in blockchain protocols are \cite{Gencer2018DecentralizationIB, 10.1145/3318041.3355463,azouvi2021levels,https://doi.org/10.48550/arxiv.2112.09941,9488812}.

\section{Framework}\label{sec:framework}
At this point, we will give an overview of the notion of \textit{coalition-safe equilibria with virtual payoff|} presented in \cite{EVP}. This notion generalizes the equilibrium notion presented in \cite{fruitchain}, and it is based on the execution model of \cite{backbone} and the ``real-world'' protocol execution model of   \cite{Canetti2,Canetti1,UC2,Canetti4}. It examines two executions of a blockchain protocol $\Pi$. In the first execution, all the participants follow the protocol and in the second execution, there exists a strategic coalition that deviates trying to maximize its collective utility.  

\subsection{Notation}\label{subsec:framework_notation}
We use $\secpar$ as the security parameter. We write $X\sim\mc{D}$ to denote that the random variable $X$ follows the distribution $\mc{D}$. The mean of the random variable $X$ is denoted by $E[X]$. By $\msf{Bin}(K,p)$, we denote the binomial distribution with $K$ trials and success probability $p$. We write $\negl(\secpar)$ to denote that a function is negligible in $\secpar$, i.e., asymptotically smaller than the inverse of any polynomial. By $|x|$, we denote the absolute value of $x$. We use `$||$' to denote the concatenation operation.

\subsection{Protocol Execution Model}\label{subsec:framework_execution}
The execution model for a blockchain protocol $\Pi$ comprises an environment $\mc{Z}$, an adversary $\mc{A}$ and the participating parties $\party{1}, \ldots, \party{n}$. The protocol execution is progressing in rounds. The environment reflects the external world to the protocol and  decides the number of rounds the execution will run. \emph{$\rounds$-admissible} environment will be the environment that performs the protocol $\rounds$ rounds, where $\rounds$ is a polynomial in the security parameter $\secpar$ that is higher than $\secpar$. Before the beginning of the execution the adversary chooses which parties it will control. Let $\corrupt$ be the set of parties controlled by the adversary and $\honest$ the set of the remaining parties called ``honest''. 
\par During each round, the environment gives inputs to the parties and activates them in a round-robin fashion (cf.~\cite{backbone}).   The parties that belong to $\honest$ follow the protocol and the parties that belong to $\corrupt$ follow the instructions of the adversary. Note that the adversary reflects a \emph{strategic coalition} that deviates from the protocol in a way that maximizes its collective utility (the sum of the utilities of all the parties that belong to $\honest$).
\par The communication between the parties is controlled by a functionality called \textit{Diffuse functionality} defined in \cite{backbone}. This functionality guarantees that every message sent from an honest party will be delivered to every other honest party by the end of each round. It allows though the adversary to rearrange the order of the messages during the round. This means that the adversary is allowed to deliver its messages first. The adversary that follows the protocol but rearranges the messages so that it delivers its messages first is denoted by $\innocent$. Note that the adversary may send some of its messages to a subset of the honest parties, so the honest parties at the end of a round can have different local view, thus local chain.
\par During the execution, the parties interact with a number of oracles $\mc{O}_1,\ldots,\mc{O}_{w_\Pi}$ that are protocol-specific for $\Pi$. For instance, $\mc{O}_j$ can be a random oracle, a signing oracle, a transaction validity oracle, etc. There is a limited number of queries that each party can make to each of $\mc{O}_1,\ldots,\mc{O}_{w_\Pi}$ per round that is denoted by $q_1,\ldots,q_{w_\Pi}$, respectively. In addition, we consider that each single query to $\mc{O}_1,\ldots,\mc{O}_{w_\Pi}$ has a non-zero cost denoted by $\cost_1,\ldots,\cost_{w_\Pi}$, respectively. 
\par If we fix the environment $\mc{Z}$ and the adversary $\mc{A}$, then the execution can be seen as a random variable denoted by $\exec{}$.

\subsection{Coalition-safe Equilibria with Virtual Payoff}\label{subsec:framework_equilibria}
The notion of \textit{coalition-safe equilibria with virtual payoff} examines the executions $\exec{}$ and $\honexec$ for an arbitrary environment $\mc{Z}$ and an arbitrary adversary $\mc{A}$ that corrupts a set $\corrupt$ including at most $\upper$ parties, where $\upper$ will be a parameter in this definition. In $\honexec$, the adversary follows the protocol but rearranges the messages to deliver its messages first. In $\exec{i}$ the adversary deviates arbitrarily. 
\par The notion compares the utility of the adversary in these two executions. Both executions have the same number rounds (the environment is admissible) so that  their utilities can be compared in a meaningful way. The utility of the adversary will be the sum of the utilities of the parties in $\corrupt$. Note that each honest party can have a different view on the utility of each other party and thus, the utility of the adversary. The reason is that the utility is computed based on the rewards which, in turn, are based on the parties', potentially different, local chains. The notion uses the \textit{lowest utility} of the adversary among all the honest parties' local views for the execution $\honexec$, denoted by $\umin(\honexec)$, and the \textit{highest} utility for the execution $\exec{}$, denoted by $\umax(\exec{})$.
Note that $\umin(\honexec)$ and $\umax(\exec{})$ are random variables over the coins of the adversary, the environment, the parties and the oracles.
\begin{definition}\label{def:EVP}
Let $\epsilon,\epsilon'$ be non-negative real values. A protocol $\Pi$ is \emph{$(\upper,\epsilon,\epsilon')$-equilibrium with virtual payoffs} (EVP) when for every $\rounds$-admissible environment $\mc{Z}$ and for every PPT adversary $\mc{A}$ that controls an arbitrary set $\corrupt$ of at most $\upper$ parties it holds that 
\begin{equation}\label{eq:EVP}
\umax(\exec{}) \leq \umin(\honexec) + \epsilon \cdot \mid \umin(\honexec) \mid +\epsilon'   
\end{equation}
with overwhelming (i.e., $1-\negl(\secpar)$) probability.
\end{definition}
According to Eq.~\eqref{eq:EVP}, the closer that the values of $\epsilon$ and $\epsilon'$ get to $0$, the ``tighter'' the equilibrium is. Some examples of utility functions of the adversary are: (i) absolute rewards, (ii) absolute rewards minus absolute cost (profit) and (iii) relative rewards. 
Note that if the adversary can deviate from the protocol and increase significantly its utility on the view of just one honest party with a non-negligible probability, then the protocol does not satisfy this notion.

For the rest of the paper, we refer to the framework presented in this section as the \emph{EVP framework}.

\section{The FruitChain Protocol}\label{sec:description}

In this section, we devise an adaptation of the FruitChain protocol~\cite{fruitchain} to the EVP framework of~\cite{EVP} outlined in Section~\ref{sec:framework}. In our adaptation, we take into account the cost of a random oracle query as well as the costs of deciding on a new local state, validating retrieved messages, and extracting sequences of records of transactions. Before the presentation of our adaptation, we provide an overview of the original  protocol description.

\subsection{Overview of the FruitChain Protocol}\label{subsec:overview}
In the FruitChain protocol \cite{fruitchain}, miners store transactions in \textit{fruits} instead of blocks. In order for a miner to create a fruit, it needs to perform PoW, as it does to produce blocks, yet fruit mining has lower difficulty. In more detail, the miner performs a 2-for-1 PoW procedure introduced in \cite{backbone}. In particular, the miner computes hashes of a specific input, where the prefix and the suffix of the hash determine whether a block or a fruit has been mined, respectively. Fruits are stored in blocks and they need to be \textit{recent} i.e., every fruit points to a block that is not far from the latest block of the ledger. 

At a high level, the FruitChain protocol prevents selfish mining attacks~\cite{selfish} because even if an attacker withholds a block, the fruits of this block that are still recent can be stored in a later block. The restriction of recency exists so that an attacker is not able to precompute an excessive amount of fruits and reveal it later, thus disrupting the \textit{chain quality} of the protocol\footnote{Chain quality was introduced in \cite{backbone} and is related to the fraction of the adversarial blocks in a sufficiently long segment of the ledger.}.  

\subsection{Parameterization and Basic Concepts}\label{subsec:basic}
The FruitChain PoW protocol is parameterized by:
\begin{enumerate}
    \item A \emph{random oracle} $\oracle(\cdot)$ that outputs strings of length $\geq 2\secpar$. The oracle responds to (block and fruit) mining queries.
    \item A \emph{collision resistant hash function} (CRHF) $\digest(\cdot)$, utilized to digest sets of fruits.
    \item A \emph{block mining hardness parameter} $\pb$. This is the probability that the $\oracle(\cdot)$ response leads to the successful mining of a block. 
    \item A \emph{fruit mining hardness parameter} $\pf$. This is the probability that the $\oracle(\cdot)$ response leads to the successful mining of a fruit. Probability $\pf$ is significantly greater {than $\pb$}.
    \item A \emph{recency parameter} $\recency$ that determines how far back can a fruit ``hang'', i.e., the fruit needs to point to an earlier block in the chain which is not too far from the block which records the fruit itself. 
\end{enumerate}

The structure of a valid fruit $\fruit$ is denoted by $\fruit=\langle \previous,h_f,\nonce,$ $\msf{dig},\record,h\rangle$, where 
\begin{itemize}
    \item[-] $\previous$ points to the previous block's reference.
    \item[-] $h_f$ is the \emph{pointer} of $\fruit$ to the block that $\fruit$ is hanging from.
    \item[-] $\record$ is the record to be contained in $\fruit$.
    \item[-] $\nonce$ is a random nonce denoting a solution to the computational puzzle that derives from conditions (1),(2) in Definition~\ref{def:valid_fruit}.
    \item[-] $\msf{dig}$ is the digest of some set of fruits $\mbf{F}$.
    \item[-] $h$ is the reference of $\fruit$, i.e., a hash of the previous fields.
\end{itemize}
\begin{definition}[Fruit validity]\label{def:valid_fruit}
A fruit $\fruit:=\langle \previous,h_f,\nonce,\msf{dig},\record,h\rangle$ is \emph{valid}, if the following hold:
\begin{enumerate}
    \item $\oracle(\previous||h_f||\nonce||\msf{dig}||\record)=h$.
    \item $[h]_{-\secpar}<D_{\pf}$, where $[h]_{-\secpar}$ denotes the last $\secpar$ bits of $h$, and $D_{\pf}$ is the difficulty value such that the probability that an input satisfies the relation is $\pf$.
    
\end{enumerate}
We say that a fruit set $\mbf{F}$ is \emph{valid}, if either it contains only valid fruits, or $\mbf{F}=\emptyset$.
\end{definition}

The structure of a valid block $\block$ is denoted by $\block:=\langle\langle \previous,h_f,\nonce,$ $\msf{dig},\record,h\rangle,\mbf{F}\rangle$, where 
\begin{itemize}
    \item[-] $\previous$ points to the previous block's reference.
    \item[-] $h_f$ is some fruit pointer.
    \item[-] $\record$ is the record to be contained in some fruit.
    \item[-] $\nonce$ is a random nonce denoting a solution to the computational puzzle that derives from condition (3),(4) in Definition~\ref{def:valid_block}.
    \item[-] $\msf{dig}$ is the digest to the set of fruits $\mbf{F}$.
    \item[-] $h$ is the reference of $\block$, i.e., a hash of the previous fields.
    \item[-] $\mbf{F}$ is the fruit set to be included in $\block$.
\end{itemize}
\begin{definition}[Block validity]\label{def:valid_block}
A block $\block=\langle\langle \previous,h_f,\nonce,\msf{dig},\record,$ $h\rangle,\mbf{F}\rangle$ is \emph{valid}, if the following hold:
\begin{enumerate}
\item $\msf{dig}=\digest(\mbf{F})$.
\item $F$ is a valid fruit set.
    \item $\oracle(\previous||h_f||\nonce||\msf{dig}||\record)=h$.
    \item $[h]_{:\secpar}<D_{\pb}$, where $[h]_{:\secpar}$ denotes the first $\secpar$ bits of $h$, and $D_{\pb}$ is the difficulty value such that the probability that an input satisfies the relation is less than $\pb$. 
\end{enumerate}
\end{definition}

\begin{remark}
By making a query to $\oracle(\cdot)$, the party cannot know in advance whether the response hash value $h$ will lead to the successful mining of a fruit and/or block, or neither of two. Thus, the fields $\previous$ and $\msf{dig}$ are included in a fruit (resp. $h_f$ and $\record$ are included in a block) only for mining purposes. 
\end{remark}

Let $\chain_i$ be the ledger state in the view of party $\party{i}$. By $\chain_i[j]:=\langle\langle \previous^{i,j},$ $h_f^{i,j},\nonce^{i,j},\msf{dig}^{i,j},$ $\record^{i,j},h^{i,j}\rangle,\mbf{F}^{i,j}\rangle$, we denote the $j$-th block of $\chain_i$ and by $|\chain_i|$ the length of $\chain_i$. 

\begin{definition}[Fruit recency]\label{def:fruit_recency}
A fruit  $\fruit:=\langle \previous,h_f,\nonce,\msf{dig},\record,h\rangle$ is \emph{recent} w.r.t. $\chain_i$ if it points to some of the last $\recency\cdot\kappa$ blocks of $\chain_i$, i.e., there exists some $k>|\chain_i|-\recency\cdot\kappa$ such that $h_f=h^{i,k}$. 
\end{definition}

\begin{definition}[Chain validity]\label{def:valid_chain}
A chain $\chain_i$ is \emph{valid}, if the following hold:
\begin{enumerate}
    \item The chain is rooted at the special ``genesis'' block, i.e., 
    \[\chain_i[0]=\langle\langle0,0,0,0,\bot,\oracle(0,0,0,0,\bot)\rangle,\emptyset\rangle\;.\]
    \item Each block is valid according to Definition~\ref{def:valid_block} and refers to the previous block's reference, i.e., 
    \[\forall j\in[|\chain_i|]: \previous^{i,j}=h^{i,j-1}\;. \] 
    \item For every $j\in[|\chain_i|]$ and every $\fruit:=\langle \previous,h_f,\nonce,\msf{dig},\record,$ $h\rangle\in\mbf{F}^{i,j}$, there exists some $k>j-\recency\cdot\kappa$ such that $h_f=h^{i,k}$.
\end{enumerate}
\end{definition}

For completeness, we describe the FruitChain PoW protocol $\protocol$ as presented in~\cite{fruitchain}, parameterized by $\oracle(\cdot),\digest(\cdot),$ $\pb,\pf,\recency$, in Figure~\ref{fig:fruitchain_description}.

\subsection{The FruitChain Protocol in the EVP Framework}\label{subsec:fruitchain_framework}
We abstract the FruitChain protocol $\protocol$ as the protocol $\protocolevp$, that specifies the oracles below:
\begin{enumerate}
    \item The \emph{longest chain oracle} $\mc{O}_\msf{lc}$: receives as input a state $\chain$ and an array of blocks $\tilde{\mathbb{B}}$. It stores $\tilde{\mathbb{B}}$ in its memory.
    Given $\chain$ and all the arrays of blocks that are stored in its memory, it constructs a set $\mbf{A}$ that includes all the chains that can be formed. It checks which of these chains are valid according to Definition \ref{def:valid_chain} and constructs a set $\mbf{A}'\subset \mbf{A}$ with these chains. 
    It finds the longest chain(s) of $\mbf{A}'$ denoted by $\chain^1, \ldots, \chain^l$. If $l> 1$, then it finds $i_0 \in \lbrace 1,\ldots,l\rbrace$ such that the last block in $\chain^{i_0}$, denoted by $\chain^{i_0} [|\chain^{i_0}|]$, appears first in $\tilde{\mathbb{B}}$.
    If $|\chain^{i_0}|>|chain|$, it sets $\chain' \leftarrow \chain^{i_0}$, otherwise $\chain' \leftarrow \chain$. It sets as $h_f$ the reference of $\chain'[\mathsf{max}\{1,$ $|\chain'|-\kappa\}]$, and as $\previous$ the reference of $\chain'[|\chain'|-1]$. 
    It extracts the sequence of records
    $\big(\record_{1},\ldots,\record_{\ell_2}\big)$ included in the fruits of $\chain'$ (by executing the procedure $\mathtt{Extract\_Fruit}(\chain')$ in Figure~\ref{fig:fruitchain_description} without checking again if $\chain'$ is valid).
    It outputs $\chain',\previous,h_f$ and $\big(\record_{1},\ldots,\record_{\ell_2}\big)$. The party can make up to $1$ query per round and the single query cost is $C_\msf{lc}$.

    %
    \item The \emph{fruit set oracle} $\mc{O}_\msf{fs}$: receives as input a state $\chain'$ and two sets of fruits $\tilde{\mbf{F}},\tilde{\mbf{F'}}$. It finds the subset $\mbf{X}$ of $\tilde{\mbf{F'}}$ that includes all the valid fruits of $\tilde{\mbf{F'}}$ according to Definition~\ref{def:valid_fruit}. It returns the set of fruits $\mbf{F}=\tilde{\mbf{F}} \cup \mbf{X} $, the set $\mbf{F}_\msf{rec}\subseteq\mbf{F}$ of valid fruits that are recent w.r.t. $\chain'$ for recency parameter $r$ (Definition~\ref{def:fruit_recency}) and are not already in $\chain'$, and the digest $\digest(\mbf{F}_\msf{rec})$. 
    The party can make up to $1$ query per round and the single query cost is $C_\msf{fs}$.
    \item The \emph{transaction oracle} $\mc{O}_\msf{tx}$: receives as input a set of transactions $\{\tx_1,\ldots,\tx_{\ell_1}\}$ and a sequence of records $(\record_1,\ldots,\record_{\ell_2})$. It computes a record of transactions $\record$ that includes all the transactions of $\{\tx_1,\ldots,\tx_{\ell_1}\}$ that are valid according to $(\record_1,\ldots,\record_{\ell_2})$, where transaction validity is defined in a protocol-specific manner. It outputs $\record$. The party can make up to $1$ query per round and the single query cost is $C_\msf{tx}$.
    \item The \emph{random oracle} $\mc{O}_\msf{ro}$:  on a general query $x\in\{0,1\}^*$ checks if there is a stored pair $(x,\cdot)$. If there is not such a pair, it randomly samples an image denoted by $H(x)$ from $\{0,1\}^{2\secpar}$, stores $(x,H(x))$ and returns $H(x)$. Otherwise, it returns $(x,H(x))$.
    In our setting, the queries will have the form $\previous||h_f||\nonce||\digest(\mbf{F}_\msf{rec})||\record$ and the response will be a reference~$h$. The party can make up to $q$ queries per round and the single query cost is $C_\msf{ro}$.

\end{enumerate}
\begin{figure}[H]
\centering
\begin{thickframe}
\begin{mdframed}
{
\underline{\emph{The protocol $\protocol$.}}\\[5pt]
\begin{small}
\noindent The protocol is initialized as $\chain_i=\langle\langle0,0,0,0,\bot,$ $\oracle(0,0,0,0,\bot)\rangle,\emptyset\rangle$ and $\mbf{F}^i=\emptyset$. Let $\mbf{F}^i_\msf{rec}$ be the set of fruits that are recent w.r.t. $\chain_i$ and are not already in $\chain_i$ (initialized as empty). Let $h_f$ be the reference of $\chain_i[\mathsf{max}\{1,|\chain_i|-\kappa\}]$. Let $\previous$ be the reference of $\chain_i[|\chain_i|-1]$.\\[3pt]
$\blacktriangleright$ Upon receiving a fruit $\fruit$, if $\fruit$ is valid, then set $\mbf{F}^i\leftarrow\mbf{F}^i\cup\fruit$.\\[2pt]
$\blacktriangleright$ Upon receiving a state $\chain'_i$, if $\chain'_i$ is valid and $|\chain'_i|>|\chain_i|$, then update $\chain_i$ as $\chain_i\leftarrow\chain'_i$.\\[2pt]
$\blacktriangleright$ Upon receiving an input $\record$ from $\mc{Z}$, 
\begin{enumerate}
    \item Pick a random nonce $\nonce\in\{0,1\}^\kappa$.
    \item Compute $h\leftarrow\oracle(\previous||h_f||\nonce||\digest(\mbf{F}^i_\msf{rec})||\record)$.
    \item If $[h]_{-\secpar}<D_{\pf}$ (fruit mining successful), then 
    \begin{enumerate}
        \item Set $\fruit\leftarrow\langle\previous,h_f,\nonce,\digest(\mbf{F}^i_\msf{rec}),\record,h\rangle$.
        \item Set $\mbf{F}^i\leftarrow\mbf{F}^i\cup\fruit$.
        \item Broadcast $\fruit$.
    \end{enumerate}
    \item If $[h]_{:\secpar}<D_{\pb}$ (block mining successful), then
    \begin{enumerate}
        \item Set $\block\leftarrow\langle\langle\previous,h_f,\nonce,\digest(\mbf{F}^i_\msf{rec}),\record,h\rangle, \mbf{F}^i_\msf{rec}\rangle$.
        \item Update $\chain_i$ as $\chain_i\leftarrow\chain_i||\block$.
        \item Broadcast $\chain_i$.
    \end{enumerate}
    \item Return $\mathtt{Extract\_Fruit}(\chain_i)$ to $\mc{Z}$, where the procedure $\mathtt{Extract\_Fruit}(\cdot)$ is described below.
    \item[]
\end{enumerate}
\end{small}
}
\end{mdframed}
\begin{mdframed}
{
\underline{\emph{The procedure $\mathtt{Extract\_Fruit}(\chain_i)$.}}\\[3pt]
\begin{small}
If $\chain_i$ is valid, then
\begin{enumerate}
    \item Extract a sequence of distinct fruits $(\fruit_1,\ldots,\fruit_\ell)$ from $\chain_i$ by including only the first occurrence in case a fruit $\fruit_k$ is included multiple times.
    \item Order the extracted fruits $\fruit_1,\ldots,\fruit_\ell$ according to the first block that contains the fruit. For fruits on the same block, follow the order in which the fruits are serialized within the block. Let $\fruit_{\sigma(1)},\ldots,\fruit_{\sigma(\ell)}$ be the ordering of the fruit sequence according to permutation $\sigma(\cdot)$.
    \item Output the sequence of records $\big(\record_{\sigma(1)},\ldots,\record_{\sigma(\ell)}\big)$, where $\record_{\sigma(k)}$ is the record contained in fruit $\fruit_{\sigma(k)}$.
\end{enumerate} 
\end{small}
}
\end{mdframed}
\end{thickframe}
\caption{The protocol $\protocol$ for party $\party{i}$.}
\label{fig:fruitchain_description}
\end{figure}
\normalsize

\begin{figure}[H]
\centering
\begin{thickframe}
\begin{mdframed}
{
\underline{\emph{The protocol $\protocolevp$.}}\\[5pt]
\begin{small}
\noindent The protocol is initialized as $\chain_i=\langle\langle0,0,0,0,\bot,$ $\oracle(0,0,0,0,\bot)\rangle,\emptyset\rangle$ and $\mbf{F}^i=\emptyset$. In addition, a flag $\msf{success}_i$ is initialized as $0$.\\[3pt]
$\blacktriangleright$ In each round $T$, upon receiving a set of transactions $\{\tx_1,\ldots,\tx_{\ell_1^T}\}$ from $\mc{Z}$, execute the following steps:
\begin{enumerate}
    \item Retrieve the set of all fruits and the array of all blocks that were diffused during the previous round $T-1$, denoted by $\tilde{\mbf{F}}^i_{T-1}$ and $\tilde{\mathbb{B}}^i_{T-1}$, respectively. Note that $\tilde{\mathbb{B}}^i_{T-1}$ is such that the block received first in $T-1$ is in the first position.

    \item Make the query $(\chain_i,\tilde{\mathbb{B}}^i_{T-1})$ to $\mc{O}_\msf{lc}$ and receive $\chain'_i,\previous,h_f$, and $(\record_1,\ldots,\record_{\ell_2^T})$. 
    \item Set $\chain_i\leftarrow\chain'_i$.
     \item Make the query $(\chain_i,\mbf{F}^i,\tilde{\mbf{F}}^i_{T-1})$ to $\mc{O}_\msf{fs}$ and receive $\tilde{\mbf{F}}^i,\mbf{F}^i_\msf{rec}$, and $\digest(\mbf{F}^i_\msf{rec})$. 
    \item Set $\mbf{F}^i\leftarrow\tilde{\mbf{F}}^i$. 
    \item Make the query $\big(\{\tx_1,\ldots,\tx_{\ell_1^T}\},(\record_1,\ldots,\record_{\ell_2^T})\big)$ to $\mc{O}_\msf{tx}$ and receive $\record$.
    \item For $k=1,\ldots,q$:
    \begin{enumerate}
     \item Pick a random nonce $\nonce_k\in\{0,1\}^\kappa$.
    \item Make the query $\previous||h_f||\nonce_k||\digest(\mbf{F}^i_\msf{rec})||\record$ to $\mc{O}_\msf{ro}$ and receive $h_k$.
    \item If $[h_k]_{-\secpar}<D_{\pf}$, then 
    \begin{enumerate}
        \item Set $\fruit\leftarrow\langle\previous,h_f,\nonce_k,\digest(\mbf{F}^i_\msf{rec}),\record,h_k\rangle$.
        \item Set $\mbf{F}^i\leftarrow\mbf{F}^i\cup\fruit$.
        \item Send $\fruit$ to the Diffuse functionality.
    \end{enumerate}
    \item If $[h_k]_{:\secpar}<D_{\pb}$ and $\msf{success}_i=0$, then
    \begin{enumerate}
        \item Set $\block\leftarrow\langle\langle\previous,h_f,\nonce_k,\digest(\mbf{F}^i_\msf{rec}),\record,h_k\rangle, \mbf{F}^i_\msf{rec}\rangle$.
        \item Set $\chain_i\leftarrow\chain_i||\block$.
        \item Set $\msf{success}_i\leftarrow1$.
        \item Send $\block$ to the Diffuse functionality.
    \end{enumerate}
     \end{enumerate}
    \item Send $\tilde{\mathbb{B}}^i_{T-1}$ to the Diffuse functionality.
    \item Set $\msf{success}_i\leftarrow0$.
    \item Return a $(\msf{complete},T)$ message to $\mc{Z}$.
   
\end{enumerate}
\end{small}
}
\end{mdframed}
\end{thickframe}
\caption{The protocol $\protocolevp$ for party $\party{i}$.}
\label{fig:fruitchain_evp}
\end{figure}
\normalsize

Given the description of $\mc{O}_\msf{lc},\mc{O}_\msf{fs},\mc{O}_\msf{tx},\mc{O}_\msf{ro}$, and the terminology in Subsection~\ref{subsec:basic}, the blockchain protocol $\protocolevp$ is presented in Figure~\ref{fig:fruitchain_evp}. At this point, we provide an overview of the  $\protocolevp$ protocol.\\[2pt]
\indent\emph{Overview of  $\protocolevp$.}
Each party stores all the fruits that are valid. Note that the validity of each fruit (cf. Definition \ref{def:valid_fruit}) does not depend on which chain constitutes the ledger, unlike the recency of the fruit (cf. Definition \ref{def:fruit_recency}). 
\par 
During each round, when a party is activated, it receives a set of transactions $\{\tx_1,\ldots,\tx_{\ell_1^T}\}$ as input from the environment, and retrieves all the fruits and blocks diffused in the previous round. Then, it gives as input to the longest chain oracle $\mc{O}_\msf{lc}$ its current chain and the blocks it retrieved, and it receives as output (i) the chain $\chain'_i$ that the party will extend and is the longest valid chain, (ii) the hash of the last block of this chain, (iii) the hash of the block to which the fruits that will be produced in this round will point and (iv) all the records of $\chain'_i$.
\par Next, the party gives as input to the fruit set oracle $\mc{O}_\msf{fs}$ the chain $\chain'_i$, the set of the valid fruits it retains and the fruits it retrieved. It receives as output (i) the updated set of the valid fruits that includes also the fruits that it retrieved and were valid (ii) the valid fruits that are recent w.r.t. $\chain'_i$  and are not already in  $\chain'_i$, and (iii) the digest of the set of these fruits which works as a ``fingerprint''.
\par  Afterwards, the party makes a query to the transaction oracle $\mc{O}_\msf{tx}$ with input the transactions $\{\tx_1,\ldots,\tx_{\ell_1^T}\}$ it received from the environment and the records $(\record_1,\ldots,\record_{\ell_2^T})$ that received from $\mc{O}_\msf{lc}$. It outputs a record that includes the transactions of $\{\tx_1,\ldots,\tx_{\ell_1^T}\}$ that were valid w.r.t. $(\record_1,\ldots,\record_{\ell_2^T})$ (transaction validity is defined in a protocol-specific way) and will be included in the instance that will be used for the queries to the random oracle $\mc{O}_\msf{ro}$.
\par Finally, it makes $q$ queries to the random oracle with input an \emph{instance} that includes: (a) the hash of the last block in $\chain'_i$, (b) the hash of the block to which the new fruits will point, (c) a nonce, (d) the digest of the recent fruits, and (e) the output of the transaction oracle. When it receives an output from the random oracle, it checks (i) if the last $\secpar$ bits are lower than $D_{\pf}$, and (ii) if the first $\secpar$ bits are lower than $D_{\pb}$. If (i) holds, a fruit has been produced, so it sends this fruit to the Diffuse functionality. If (ii) holds, then a block has been produced, so it sends this block to the Diffuse functionality and stops checking (ii) in the queries.\\[2pt] 
\indent\emph{Assignment of rewards.}
We consider that every fruit included in a block that is part of the ledger mints  $\reward$ rewards and sends them to the party specified in the `` coinbase '' transaction\footnote{\url{https://en.bitcoin.it/wiki/Coinbase}}  in the fruit's record $\record$. Note that the rewards and costs of querying the protocol's oracles are in the same unit.

\noindent Below, we discuss the differences between the descriptions of $\protocolevp$ and $\protocol$.

\begin{enumerate}
    \item The environment provides transactions, not records. In addition, $\protocolevp$ checks if the transactions provided by the environment are valid according to the party's local chain. We do not specify when a transaction is valid according to the party's local chain because this depends on the format of the transactions the protocol accepts.  
    \item (i) The party makes $q$ queries to the random oracle $\mc{O}_\msf{ro}$ per round (instead of $1$) (ii) each party can produce at most one block per round. These modifications are in line with the execution model of \cite{backbone}.
    \item The party returns just `$\msf{complete}$'  and the number of the round to the environment, instead of a sequence of records. Note that as the party diffuses its blocks, the environment can receive them via its interaction with the adversary. 
    \item The party diffuses only the blocks that it received during the previous round and the blocks and fruits that it has produced during the current round; unlike in $\protocol$, the party does not diffuse its whole chain every time it produces a new block. On the other hand, the longest chain oracle $\mc{O}_\msf{lc}$ stores all the blocks it receives from the beginning of the execution. This approach reflects the realistic setting where the parties' local chains are not communicated over the network during the mining process.
    Instead, only the newly mined blocks are normally diffused, and the miners can reconstruct all possible chains given the received blocks they have recorded throughout the execution \footnote{\url{https://wiki.bitcoinsv.io/index.php/Main_Page}}. Note that when the network is synchronous, which means that at the end of each round all the messages diffused by honest parties are delivered to every other honest party, the approach in $\protocolevp$ ``implies'' the one in $\protocol$. Namely, due to synchronicity,
    in the beginning of round $T$, every honest party $
    \party{i}$ can \emph{recursively} reconstruct the local state $\chain_j$ of another honest party $
    \party{j}$  given its view of $
\chain_j$ in the beginning of the previous round $T-1$ and the blocks of $
    \party{j}$ that $
    \party{i}$ received by the end of $T-1$ (note that all honest parties' states are initialized as $\langle\langle0,0,0,0,\bot,\oracle(0,0,0,0,\bot)\rangle,\emptyset\rangle$, so recursive reconstruction is feasible across honest parties as rounds progress).
    Besides, at any moment during round $T$, $\party{i}$ can reconstruct $\chain_j$ via its view of $\chain_j$ in the beginning of $T$ and the blocks and fruits received from $
    \party{j}$ since the beginning of the round.      
    \item The recipient of the fruit's rewards is the party specified in the ``coinbase'' transaction. In \cite{fruitchain} the fruit's rewards are shared evenly among the miners of the fruits that belong to a preceding part of the ledger. Note that we follow the approach of \cite{EVP} which is equivalent to the approach of \cite{fruitchain} when we do not take into account transaction fees, which means that we assume that each fruit gives the same rewards.
    
\end{enumerate}

\section{The Single Pool Protocol}\label{sec:single}
In this section, we provide the definition of a pool in a blockchain system and we describe the rules of a single pool in \textsc{FruitChain}, denoted by $\single$, that includes all the parties. In this pool, all the parties ask the random oracle, but only the pool leader asks the longest chain, the fruit set and the transaction oracle, and determines the instance that will be used by all the parties for the queries to the random oracle.  In the next section, we will prove that joining this ``centralised'' pool is an EVP.  
    
\subsection{Definition of a Pool}\label{subsec:definition_pool}
Intuitively, a pool of some protocol $\Pi$ comprises a subset of parties in $\Pi$ that collaborate by interacting internally according to some well-specified communication pattern and guidelines. Formally, we provide the following definition.

\begin{definition}[Pool]\label{def:pool}
Let $\Pi$ be a blockchain protocol with parties $\party{1},\ldots,\party{n}$. A \emph{pool of $\Pi$} is a quadruple $\langle\mbf{V},\mbf{E},\mc{F}_\msf{comm},\tilde{\Pi}\rangle$, where
\begin{itemize}
    \item $\mbf{V}\subseteq\{\party{1},\ldots,\party{n}\}$ is a subset of parties in $\Pi$.
    \item $\mbf{E}\subseteq\{(\party{i},\party{j})|\party{i},\party{j}\in\mbf{V}\}$ is a subset of pairs of parties in $\mbf{V}$ that determines the available simplex communication connections among parties in $\mbf{V}$.
    \item $\mc{F}_\msf{comm}$ is a communication functionality that supports the parties' interaction w.r.t. $\mbf{E}$.
    \item $\tilde{\Pi}$ is a protocol executed by parties in $\mbf{V}$ that captures the execution instructions for the parties in $\mbf{V}$. In addition, $\tilde{\Pi}$ allows parties to have access to the Diffuse functionality (cf. Subsection~\ref{subsec:framework_execution}), hence to the messages exchanged during the execution of $\Pi$.%
\end{itemize}
\end{definition}

\subsection{A Single Pool of $\protocolevp$}\label{subsec:single_pool}
Given Definition~\ref{def:pool}, we specify a \emph{single pool of $\protocolevp$ with leader $\party{L}$} as the quadruple $\langle\mbf{V},\mbf{E},\mc{F}_\msf{auth}(\mbf{E}),\single\rangle$ where
\begin{itemize}
    \item $\mbf{V}:=\{\party{1},\ldots,\party{n}\}$, i.e., all the parties collaborate. For some $i^*\in[n]$, we have that $\party{L}=\party{{i^*}}$.
    \item $\mbf{E}:=\{(\party{L},\party{i}),(\party{i},\party{L})\}_{i\in[n]\setminus\{i^*\}}$. Namely, the pool leader $\party{L}$ can communicate with every other party and vice versa. Note that the non leader parties do not communicate with each other.
    \item $\mc{F}_\msf{auth}(\mbf{E})$ is the \emph{message authentication functionality} w.r.t. $\mbf{E}$, defined in the spirit of~\cite{Canetti04} as follows:
\begin{itemize}
    \item[$\blacktriangleright$] Upon receiving $(\textsc{Send},\party{j},M)$ from $\party{i}$, if $(\party{i},\party{j})\in\mbf{E}$, then $\mc{F}_\msf{auth}(\mbf{E})$ sends the message $(\textsc{Sent},\party{i},M)$ to $\party{j}$.
\end{itemize}
Similar to~\cite{Canetti04}, $\mc{F}_\msf{auth}(\mbf{E})$ can be implemented via digital signatures and some setup assumption, such as the presence of a certification authority or the out-of-band exchange of verification keys among the parties in the pool. 
   \item $\single$ is executed by $\party{1},\ldots,\party{n}$ and defines each party's deviation from the protocol $\protocolevp$. During the execution of $\single$, $\party{L}$ takes over the cost for setting up an instance to the random oracle in each round. Then, all parties contribute to the fruit and block mining effort w.r.t. this instance. Upon successful mining of a block, $\party{L}$ shares the rewards that correspond to the fruits included in this block according to the guidelines. The protocol $\single$ is formally introduced in the following subsection.
\end{itemize}

\subsection{Protocol Description}\label{subsec:single_description}
First, we describe an additional oracle that $\single$ utilizes and provide its overview. 
\par 
The \emph{light transaction verification oracle}  $\mc{O}_\msf{ltx}$: receives as input a record of transactions $\record$ and a transaction $\tx$. It outputs $1$ if the transaction $\tx$ is included in the record $\record$ and $\tx$ is valid \footnote{The transaction validity is defined in a protocol-specific manner.}, and $0$ otherwise. The party can make up to $1$ query per round and the cost of a single query is $C_\msf{ltx}$\footnote{This oracle reflects a procedure similar to the ``simplified payment verification'' \url{https://wiki.bitcoinsv.io/index.php/Simplified_Payment_Verification}. $C_\msf{ltx}$ is significantly lower compared to  $C_\msf{tx}$. }.\\[2pt]
\indent\emph{Overview of $\single$.}
During each round, the pool leader asks $\mc{O}_\msf{lc},\mc{O}_\msf{fs}$, and $\mc{O}_\msf{tx}$, and creates the instance that will be used for the queries to the random oracle. Then, it sends this instance to the pool members. The pool leader and the other members ask the random oracle $q$ queries when they are activated. When a fruit or a block is produced, they send it to the Diffuse functionality (at most one block per round). 
\par Both the pool leader and the other pool members count the cost that the pool leader should incur for the oracles $\mc{O}_\msf{lc},\mc{O}_\msf{fs},\mc{O}_\msf{tx}$. When a block that uses the instance sent by the pool leader has been diffused, the pool leader creates a payment transaction in the next round. The payments are as follows: if the cost that the pool leader incurred for creating the instances since the last block is higher than the block's rewards (which is equal to the number of fruits multiplied by the fruit reward $\rf$), then the pool leader holds all the rewards. If the block's rewards are higher, then the pool leader subtracts the cost and shares the remaining rewards equally among all the members of the pool including itself. Note that the members will check if the payments have been computed correctly via the light transaction verification oracle \footnote{A similar countermeasure has also been used in P2pool
\url{https://bitcoinmagazine.com/technical/p2pool-bitcoin-mining-decentralization}}.
In addition, to prevent block withholding attacks (cf. \cite{ Rosenfeld2011AnalysisOB,7163020}), both the pool leader and the other pool members check if the diffused fruits and blocks use the instance sent by the pool leader; if not, they abandon the pool.
The $\single$ protocol is presented in  Figures~\ref{fig:single_leader} and~\ref{fig:single_other}.

\begin{figure}[H]

\begin{thickframe}
\begin{mdframed}
{
\underline{\emph{The protocol $\single$ for pool leader $\party{L}$.}}\\[5pt]
\begin{small}
\noindent The protocol is initialized as $\chain=\langle\langle0,0,0,0,\bot,$ $\oracle(0,0,0,0,\bot)\rangle,\emptyset\rangle$, a list $record$ as empty and $\mbf{F}=\emptyset$. In addition, a flag $success$ is initialized as $0$, a quadruple of variables $(inst_1,inst_2,inst_3,inst_4)$ as $(0,0,0,\bot)$, and variables $cost,rewards,W_L,W$ as $0$.\\[3pt]
$\blacktriangleright$ In each round $T$, upon receiving a set of transactions $\{\tx_1,\ldots,\tx_{\ell_1^T}\}$ from $\mc{Z}$, execute the following steps:
\begin{enumerate}
    \item Retrieve the set of all fruits and the array of all blocks that were diffused during the previous round $T-1$, denoted by $\tilde{\mbf{F}}_{T-1}$ and $\tilde{\mathbb{B}}_{T-1}$, respectively. Note that $\tilde{\mathbb{B}}_{T-1}$ is such that the block received first in $T-1$ is in the first position.
   \item If $\tilde{\mathbb{B}}_{T-1}$ contains a block $\hat{\block}:=\langle\langle \hat{h}_{-1},\hat{h}_f,\hat{\nonce},\hat{\msf{dig}},\hat{\record},\hat{h}\rangle,\hat{\mbf{F}}\rangle$ or $\tilde{\mbf{F}}_{T-1}$ contains a fruit $\langle \hat{h}_{-1},\hat{h}_f,\hat{\nonce},\hat{\msf{dig}},\hat{\record},\hat{h}\rangle$ such that $(\hat{\previous},\hat{h}_f,\hat{\msf{dig}},\hat{\record})\neq(inst_1,inst_2,inst_3,inst_4)$, then dissolve the pool and from this round and on proceed by following the fallback $\protocolevp$ (cf. Figure~\ref{fig:fruitchain_evp}).
   \item If $\tilde{\mathbb{B}}_{T-1}$ is not empty, then
   \begin{enumerate}
       \item Parse the first block in $\tilde{\mathbb{B}}_{T-1}$ as $\langle\langle \tilde{h}_{-1},\tilde{h}_f,\tilde{\nonce},\tilde{\msf{dig}},\tilde{\record},\tilde{h}\rangle,\tilde{\mbf{F}}\rangle$ and set the total rewards as $rew\leftarrow|\tilde{\mbf{F}}|\cdot\reward$.
       \item Set the leader's payment as 
       \[W_L\leftarrow \mathrm{min}\{cost,rew\}+\mathrm{max}\Big\{\dfrac{rew-cost}{n},0\Big\}.\]
       \item Set the payment of every other party as
       \[W\leftarrow\mathrm{max}\Big\{\dfrac{rew-cost}{n},0\Big\}.\]
       \item Create the special transaction $\tx_T$ that includes the payment of every party $\party{i}\neq\party{L}$ for round $T$.
       \item Make the query $(\chain,\tilde{\mathbb{B}}_{T-1})$ to $\mc{O}_\msf{lc}$ and receive $\chain',\previous,h_f$, and $(\record_1,\ldots,\record_{\ell_2^T})$. 
       \item Set $\chain\leftarrow\chain'$, $inst_1\leftarrow\previous$, $inst_2\leftarrow h_f$, and $record\leftarrow(\record_1,\ldots,\record_{\ell_2^T})$.
       \item Set $cost\leftarrow\cost_\msf{lc}$.
   \end{enumerate}
     \item Make the query $(\chain,\mbf{F},\tilde{\mbf{F}}_{T-1})$ to $\mc{O}_\msf{fs}$ and receive $\mbf{F}',\mbf{F}_\msf{rec}$, and $\digest(\mbf{F}_\msf{rec})$.
    \item Set $\mbf{F}\leftarrow\mbf{F}'$ and $inst_3\leftarrow\digest(\mbf{F}_\msf{rec})$.
    \item Set $cost\leftarrow cost+\cost_\msf{fs}$.
    \item If $\tilde{\mathbb{B}}_{T-1}$ is not empty, then make the query $\big(\{\tx_1,\ldots,\tx_{\ell_1^T}\}\cup\{\tx_T\},record\big)$ to $\mc{O}_\msf{tx}$ and receive $\record$. Else, make the query $\big(\{\tx_1,\ldots,\tx_{\ell_1^T}\},record\big)$ to $\mc{O}_\msf{tx}$ and receive $\record$.
    \item Set $inst_4\leftarrow\record$.
    \item Set $cost\leftarrow cost+\cost_\msf{tx}$.
    \item For every party $\party{i}\neq\party{L}$, if $\tilde{\mathbb{B}}_{T-1}$ is not empty send $(\textsc{Send},\party{i},(inst_1,inst_2, inst_3,inst_4,$ $\mbf{F}_\msf{rec},T,\tx_T))$ to $\mc{F}_\msf{auth}(\mbf{E})$, else send $(\textsc{Send},\party{i},$ $(inst_1,inst_2,inst_3,inst_4,\mbf{F}_\msf{rec},T))$ to $\mc{F}_\msf{auth}(\mbf{E})$.
    \item Participate in the fruit and block mining process by querying $\mc{O}_\msf{ro}$ like any party (cf. Figure~\ref{fig:single_other}). 
    \item Return a $(\msf{complete},T)$ message to $\mc{Z}$.
   
\end{enumerate}
\end{small}
}
\end{mdframed}
\end{thickframe}
         \caption{The protocol $\single$ for the leader $\party{L}$.}
         \label{fig:single_leader}
     \end{figure}
\normalsize
\begin{figure}[H]

\begin{thickframe}
\begin{mdframed}
{
\underline{\emph{The protocol $\single$ for non leader $\party{i}$.}}\\[5pt]
\begin{small}
\noindent A flag $success_i$ is initialized as $0$ and a variable $cost$ is initialized as $0$. A quadruple of variables $(inst_1,inst_2,inst_3,inst_4)$ is initialized as $(0,0,0,\bot)$.\\[3pt]
$\blacktriangleright$ In each round $T$, upon receiving  a message  $(\textsc{Sent},\party{L},(\previous,h_f,\digest(\mbf{F}_\msf{rec}),\record,\mbf{F}_\msf{rec},T,\tx_T))$ or $(\textsc{Sent},\party{L},(\previous,h_f,\digest(\mbf{F}_\msf{rec}),\record,\mbf{F}_\msf{rec},T))$ from $\mc{F}_\msf{auth}(\mbf{E})$, store the tuple $(\previous,h_f,\digest(\mbf{F}_\msf{rec}),\mbf{F}_\msf{rec},T,\tx_T)$ or $(\previous,h_f,\digest(\mbf{F}_\msf{rec}),\mbf{F}_\msf{rec},T)$ respectively.
 
$\blacktriangleright$ In each round $T$, upon receiving a set of transactions $\{\tx_1,\ldots,\tx_{\ell_i^T}\}$ from $\mc{Z}$, execute the following steps:
\begin{enumerate}
    \item Retrieve the set of all fruits and the array of all blocks that were diffused during the previous round $T-1$, denoted by $\tilde{\mbf{F}}^i_{T-1}$ and $\tilde{\mathbb{B}}^i_{T-1}$ respectively. Note that $\tilde{\mathbb{B}}_{T-1}$ is such that the block received first in $T-1$ is in the first position.

   \item If any of the following hold, then leave the pool and from this round and on proceed by following the fallback $\protocolevp$ (cf. Figure~\ref{fig:fruitchain_evp}):
   \begin{enumerate}
       \item $\tilde{\mathbb{B}}^i_{T-1}$ contains a block $\hat{\block}:=\langle\langle \hat{h}_{-1},\hat{h}_f,\hat{\nonce},$ $\hat{\msf{dig}},\hat{\record},\hat{h}\rangle,\hat{\mbf{F}}\rangle$ or $\tilde{\mbf{F}}^i_{T-1}$ contains a fruit $\langle \hat{h}_{-1},\hat{h}_f,\hat{\nonce},\hat{\msf{dig}},\hat{\record},\hat{h}\rangle$ such that $(\hat{\previous},\hat{h}_f,\hat{\msf{dig}},\hat{\record})\neq(inst_1,inst_2,inst_3,inst_4)$.
       \item $\tilde{\mathbb{B}}_{T-1}$ is not empty, $\langle\langle \tilde{h}_{-1},\tilde{h}_f,\tilde{\nonce},\tilde{\msf{dig}},\tilde{\record},\tilde{h}\rangle,\tilde{\mbf{F}}\rangle$ is the first block in $\tilde{\mathbb{B}}_{T-1}$, and one of the following happens: (i) $\party{i}$ received $(\textsc{Sent},\party{L},(\previous,h_f,\digest(\mbf{F}_\msf{rec}),\record,\mbf{F}_\msf{rec},T))$ (ii) $\tx_T$ does not include  $\party{i}$ as recipient or the amount that is sent to $\party{i}$ differs from  $\mathrm{max}\Big\{\dfrac{|\tilde{\mbf{F}}|\cdot\reward-cost}{n},0\Big \}$ (iii) it makes a query to the oracle $\mc{O}_\msf{ltx}$ with input $\tilde{\record}$ and  $\tx_T$ and it gets output 0.
       \item $\party{i}$ did not receive a message  $(\textsc{Sent},\party{L},(\previous,h_f,\digest(\mbf{F}_\msf{rec}),\record,\mbf{F}_\msf{rec},T))$ or $(\textsc{Sent},\party{L},(\previous,h_f,\digest(\mbf{F}_\msf{rec}),\record,\mbf{F}_\msf{rec},T,\tx_T))$ from $\mc{F}_\msf{auth}(\mbf{E})$.
   \end{enumerate}
    \item If $\tilde{\mathbb{B}}_{T-1}$ is not empty then $cost \leftarrow C_\msf{lc}+ C_\msf{fs}+C_\msf{tx}$ otherwise $cost \leftarrow cost+ C_\msf{fs}+C_\msf{tx}$.
    \item Read the values $\previous,h_f,\digest(\mbf{F}_\msf{rec}),\mbf{F}_\msf{rec},T$ and set $inst_1\leftarrow\previous$, $inst_2\leftarrow h_f$, $inst_3\leftarrow \digest(\mbf{F}_\msf{rec})$, and $inst_4\leftarrow\record$.

    \item For $k=1,\ldots,q$:
    \begin{enumerate}
     \item Pick a random nonce $\nonce_k\in\{0,1\}^\kappa$.
    \item Make the query $inst_1||inst_2||\nonce_k||inst_3||inst_4$ to $\mc{O}_\msf{ro}$ and receive $h_k$.
    \item If $[h_k]_{-\secpar}<D_{\pf}$, then 
    \begin{enumerate}
        \item Set $\fruit\leftarrow\langle inst_1,inst_2,\nonce_k,inst_3,inst_4,h_k\rangle$.
        \item Send $\fruit$ to the Diffuse functionality.
    \end{enumerate}
    \item If $[h_k]_{:\secpar}<D_{\pb}$ and $\msf{success}_i=0$, then
    \begin{enumerate}
        \item Set $\block\leftarrow\langle\langle inst_1,inst_2,\nonce_k,inst_3,inst_4,h_k\rangle, \mbf{F}_\msf{rec}\rangle$.
        \item Set $\msf{success}_i\leftarrow1$.
        \item Send $\block$ to the Diffuse functionality.
    \end{enumerate}
     \end{enumerate}
    \item Send $\tilde{\mathbb{B}}^i_{T-1}$ to the Diffuse functionality.
    \item Delete the tuple $(\previous,h_f,\digest(\mbf{F}_\msf{rec}),\mbf{F}_\msf{rec},T)$ or $(\previous,h_f,\digest(\mbf{F}_\msf{rec}),\mbf{F}_\msf{rec},T,\tx_T)$.
    \item Set $success_i\leftarrow0$.
    \item Return a $(\msf{complete},T)$ message to $\mc{Z}$.
   
\end{enumerate}
\end{small}
}
\end{mdframed}
\end{thickframe}
         \caption{The protocol $\single$ for non leader $\party{i}$.}
         \label{fig:single_other}
     \end{figure}
\normalsize
\begin{remark}\label{rem:payment}
As explained in Figure~\ref{fig:single_leader}, the pool leader carries out the payments of the other parties by including them in a special transaction $\tx_T$. Since the exact payment method does not affect our analysis, we do not provide details on the format of $\tx_T$. In practice, each party $\party{i}\neq\party{L}$ could provide $\party{L}$ with a fresh public key $\msf{pk}_i$, and $\party{L}$ would include a payment linked to $\msf{pk}_i$ in $\tx_T$.
\end{remark}
\section{$\single$ as an EVP}\label{sec:equilibrium_proof}

In this section, we provide our main result. Namely, that the protocol $\single$ is an EVP according to Definition~\ref{def:EVP}. In our theorem statement, we quantify over a class of adversaries whose strategy does not result in the mining of blocks that are ``almost'' empty (i.e., they contain only the special payment transaction $\tx_T$). As we shortly explain, restricting to this class is meaningful and does not harm the generality of our result. In particular, we define the following type of adversary.

For some round $T$, let $\{\mbf{V}_1,\ldots,\mbf{V}_{K_T}\}$ be the partition of the party set $\{\party{1},\ldots,\party{n}\}$ such that for every $i\in[K_T]$, the parties in $\mbf{V}_i$ form a pool according to Definition~\ref{def:pool} (trivially, if $\mbf{V}_i$ is a singleton, then the single party in $\mbf{V}_i$ acts on its own). We say that an adversary $\mathcal{A}$ that controls a coalition $\corrupt\subset\{\party{1},\ldots,\party{n}\}$ is \emph{$\mc{O}_\msf{tx}$-respecting} if for every round $T$ and every $\mbf{V}_i$, $i\in[K_T]$, there is at least one party in $\mbf{V}_i$ that asks the transaction oracle $\mc{O}_\msf{tx}$ during $T$. 

We stress that if we lift the above restriction and quantify over all adversaries in our theorem statement, then using similar proof techniques, we can show that a variant of $\single$ where the leader never queries $\mc{O}_\msf{tx}$, ignores its input transactions, and sets the record as the singleton that includes only the special payment transaction (cf. Remark~\ref{rem:variant} for the variant description details), is an EVP. This strategy profile is related to the verifier's dilemma introduced in~\cite{LuuTKS15}, according to which miners are motivated to skip verification of transactions when the cost is significant.  Observe that this variant of $\single$ forms again a single pool with all the parties (which means that is again completely centralized), but violates liveness\footnote{A blockchain protocol satisfies liveness, if every transaction that has been issued and diffused by an honest party  will be included eventually in the ledger with $1-\negl(\secpar)$ probability~\cite{backbone}}. In our main theorem, we do not follow this direction, as for this variant to be functional, it is necessary that there is no external observer that can check the validity of the chain and affect the profit of the parties. This is not true in practice, as external users can easily detect that the blocks are empty, harm the reputation of the system, and thus affect the price of the currency the parties of the pool earn.

Our main theorem statement relies on three reasonable assumptions: (i) the expected rewards per random oracle query are higher than the cost of the query and the cost needed to form the instance for the query (recall that the rewards and the costs are in the same unit), (ii) $\pb=\Omega(\frac{1}{nq})$, and (iii) $\pf<\tfrac{1}{2}$. Moreover, the multiplicative approximation factor is zero. Besides, the three dominant terms in the additive approximation factor are justified as follows:
\begin{enumerate}
    \item[(a)] The term $O\big(\rounds(n-1)\big)\pf\reward$ (a small fraction of the adversary's expected total rewards) appears because the EVP notion compares the exact profit of the adversary in the two executions with $1-\negl(\secpar)$ probability.
    \item[(b)] The term $O\big(\log\secpar\sqrt{\rounds}\big)\cost_\msf{lc}$ (the difference between $\umax(\exec{})$ and $\umin(\honexec)$ in the cost of asking the longest chain oracle) is due to the same reason as above.
   \item[(c)] The term $O\big( \rounds(n-1)\big)\cost_\msf{ltx}$ (the difference between $\umax(\exec{})$ and $\umin(\honexec)$ in the cost of asking the light transaction oracle) derives from the fact that in $\umax(\exec{})$, the parties check that they have got paid correctly by the pool leader. Typically, the cost for this check is relatively small.
\end{enumerate}

\begin{theorem}
\label{th:equilibrium}
Let (i) $\pf\reward>\tfrac{\cost_\msf{lc}+\cost_\msf{fs}+\cost_\msf{tx}}{(1-\frac{\log\secpar}{\sqrt[4]{n}}) \sqrt{n}q}+\cost_\msf{ro}$, (ii) $\pb=\Omega(\frac{1}{nq})$, and (iii) $\pf<\tfrac{1}{2}$. Then, for any $\delta\in \big[\tfrac{\log\secpar}{\sqrt[4]{\rounds n}},1\big)$, the $\single$ protocol is an $(n-1,0,\epsilon')$-EVP according to the utility profit, where
\begin{equation*}
\begin{split}
\epsilon'=&\Big(\big(\tfrac{\log\secpar}{\sqrt{\rounds n}}+\delta\big)\rounds+\log^2\secpar\big(1+\tfrac{1}{\pf}\big) -\big(\tfrac{\log^3\secpar}{\sqrt{\rounds n}}+1+\delta\big)\Big)(n-1)q\pf\reward+\\
&+\tfrac{n-1}{n}\Big((2\log\secpar)\sqrt{\rounds}+1-\tfrac{\log\secpar}{\sqrt{\rounds}}\Big)(1-(1-\pb)^{nq})\cost_\msf{lc}+\\
&+\big(1+\tfrac{\log\secpar}{\sqrt{\rounds}}\big)\rounds(1-(1-\pb)^{nq})(n-1)\cost_\msf{ltx}+\\
&+\tfrac{\log^2\secpar}{n}(\cost_\msf{fs}+\cost_\msf{tx}),
\end{split}    
\end{equation*}
 w.r.t. every   $\mc{O}_\msf{tx}$-respecting adversary $\mc{A}$ and every $\rounds$-admissible environment $\mc{Z}$ that activates the pool leader first in each round.
\end{theorem}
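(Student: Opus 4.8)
The plan is to bound $\umax(\exec{})$ from above and $\umin(\honexec)$ from below, and show the gap is at most $\epsilon'$ with overwhelming probability. The key structural observation is that the adversary controls at most $n-1$ parties, so at least one honest party $\party{j}$ follows $\single$; its local view is what we use to evaluate $\umin(\honexec)$. In the honest execution, every party participates in the single pool: the leader $\party{L}$ (activated first, by assumption on $\mc{Z}$) pays for $\mc{O}_\msf{lc},\mc{O}_\msf{fs},\mc{O}_\msf{tx}$ once per round, all $nq$ random-oracle queries use the same instance, and the pool's collective block rate is governed by $\msf{Bin}(nq,\pb)$ while the fruit rate per round is $\msf{Bin}(nq,\pf)$. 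I would first write the adversary's profit in $\honexec$ as (total fruit rewards over the ledger) minus (total oracle costs), and use the fruit-recency mechanism (Definition~\ref{def:fruit_recency}) together with assumption (iii) $\pf<\tfrac12$ — in the spirit of the chain-quality argument of~\cite{fruitchain,backbone} — to argue that almost all mined fruits get recorded in the ledger, so the adversary's expected reward share is essentially its power fraction times $\rounds q\pf\reward$, with the costs being the per-round $\cost_\msf{lc}+\cost_\msf{fs}+\cost_\msf{tx}$ amortized over the pool plus $q\cost_\msf{ro}$ per corrupt party per round. Concentration via Chernoff/Hoeffding bounds on these binomials (this is where the $\log\secpar\sqrt{\rounds}$ and $\delta\rounds$ slack terms enter, matching terms (a) and (b) in the theorem preamble) gives a high-probability lower bound on $\umin(\honexec)$.

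Next I would upper-bound $\umax(\exec{})$ over an arbitrary $\mc{O}_\msf{tx}$-respecting PPT adversary. The crucial point is that deviating cannot increase the adversary's profit much: (i) any strategy still needs to pay the random-oracle cost $\cost_\msf{ro}$ per query to mine, and by assumption (i) the marginal expected reward per query strictly exceeds $\cost_\msf{ro}$ plus the amortized instance cost, so skipping queries only loses money; (ii) the adversary gains nothing by attacking honest parties, because in $\single$ the honest (pool-following) parties either stay in the pool — in which case their instance is the leader's and the adversary cannot steal their fruits — or, upon detecting a block/fruit on a non-pool instance, they dissolve the pool and fall back to $\protocolevp$, a configuration whose profit is no better for the adversary than the pool configuration (this is exactly the design feature flagged in the second bullet of "Other related work"); (iii) withholding blocks or fruits is neutralized by recency plus the fact that in the honest execution the ledger already absorbs nearly all fruits. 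The net effect is that the best the adversary can do, up to lower-order terms, is play honestly, so $\umax(\exec{}) \le \umin(\honexec) + \epsilon'$; the $\mc{O}_\msf{tx}$-respecting restriction is precisely what rules out the degenerate empty-block strategy discussed before the theorem, and the $\cost_\msf{ltx}$ term in $\epsilon'$ accounts for the honest parties' payment-verification queries that are present in $\exec{}$ but costed differently than in $\honexec$ (term (c)).

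The main obstacle will be step (ii): rigorously arguing that no deviation — in particular, a member mining under a hidden identity on a different instance, or the adversary corrupting the would-be leader and refusing to share, or a partial block-withholding schedule — yields a super-$\epsilon'$ advantage on the view of the single honest party $\party{j}$. This requires a careful case analysis on which of the pool-dissolution triggers in Figures~\ref{fig:single_leader}–\ref{fig:single_other} fires and when, and showing that in the post-dissolution $\protocolevp$ phase the adversary's relative power is unchanged, so its profit there is governed by the same fruit-reward arithmetic with no extra upside. I would handle this by splitting the execution at the (random) first round a dissolution occurs, bounding the pre-dissolution profit by the honest-execution analysis, and bounding the post-dissolution profit by a straightforward $\protocolevp$ fruit-count argument; summing and applying a union bound over the $\negl(\secpar)$ failure events of each concentration inequality yields the claim. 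The remaining bookkeeping — propagating the $\delta$ parameter, the $\tfrac{\log\secpar}{\sqrt[4]{n}}$ factor from assumption (i), and assumption (ii) $\pb=\Omega(\tfrac1{nq})$ (needed so that blocks appear frequently enough that the amortized instance cost per block stays bounded) into the explicit $\epsilon'$ expression — is routine but tedious, and I would defer it to a sequence of claims.
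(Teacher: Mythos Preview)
Your high-level decomposition matches the paper's: lower-bound $\umin(\honexec)$, upper-bound $\umax(\exec{})$, split the adversarial run at the round the pool dissolves, and reduce the zoo of deviations to the basic ones (fewer random-oracle queries, skipping $\mc{O}_\msf{ltx}$, abandoning the pool). The paper organizes this as three claims: a lower bound on $\umin(\honexec)$, an upper bound on $\umax(\exec{})$ for adversaries that combine exactly those three basic deviations, and a reduction showing all other deviations (mining on a different instance, withholding, leader underpaying, etc.) are dominated by or collapse to combinations of those three. Your proposed ``split at the first dissolution round'' and ``case analysis on triggers'' is essentially the same skeleton.

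That said, three places in your sketch do not line up with what the proof actually needs. First, the step ``skipping queries only loses money'' is not as immediate as you present it. Reducing the coalition's query count $Q$ saves $\cost_\msf{ro}$ \emph{and} lowers the pool's block rate $1-(1-\pb)^{q+Q}$, hence the shared $\cost_\msf{lc}$ cost; the net effect on profit is a function of the form $a(1-\pb)^Q + bQ + c$, and the paper has to argue (using assumption~(i) to ensure $b>0$, and smallness of $\pb$) that this is increasing on $[0,(n-1)q]$. It also handles $Q<\sqrt{n}q$ separately, since then the ``profitable block'' concentration fails and one instead bounds total rewards crudely. Your one-line marginal-reward argument would not by itself give the stated $\epsilon'$.

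Second, for the $\umin(\honexec)$ lower bound you invoke fruit recency and a chain-quality argument. In $\honexec$ every party mines on the leader's single instance, so there is exactly one chain and every mined fruit goes into the next block trivially; no chain-quality reasoning is needed (and it would not be available post-dissolution anyway, since the coalition then holds majority hash power). What the paper actually needs here is a ``profitable block'' sub-step: each block's fruit rewards exceed the leader's accumulated $(\cost_\msf{lc},\cost_\msf{fs},\cost_\msf{tx})$ costs with overwhelming probability (this is where the $\sqrt{n}q$ in assumption~(i) enters), so the sharing rule gives the coalition exactly $\tfrac{n-1}{n}$ of net profit. Assumption~(ii) is then used simply to guarantee a block in the last $\log^2\kappa$ rounds, so fruits mined before then are counted. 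Assumption~(iii), $\pf<\tfrac12$, is not used for chain quality at all; it is a purely technical condition in the upper-bound analysis (Cases~1 and~2 of the paper's Claim~2) to make certain inequalities go through.

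Third, your reading of the $\cost_\msf{ltx}$ term is inverted: in $\honexec$ the coalition follows $\single$ and \emph{does} pay $\cost_\msf{ltx}$ at every payment round, whereas in the optimal $\exec{}$ the adversary skips those queries entirely. The term appears in $\epsilon'$ because it is a cost charged to $\umin(\honexec)$ that $\umax(\exec{})$ can avoid, not the other way around.
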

\begin{proof}
We will assume that the adversary has corrupted a set $\corrupt$ with $n-1$ parties and can deviate from the $\single$ protocol arbitrarily. We will prove that for every $\rounds$-admissible environment $\mc{Z}$ that activates the leader first in each round and for every PPT adversary $\mc{A}$ that controls $\corrupt$, it holds that 
\begin{equation*}\label{EVP}
\umax(\exec{}) \leq \umin(\honexec) + \epsilon \cdot \mid \umin(\honexec) \mid +\epsilon'   
\end{equation*}
with overwhelming probability in the security parameter $\secpar$.
\par
Note that we do not quantify over adversaries that control a set $\corrupt'$ with $t'$ parties, where $t'< n-1$. The reason is that for every adversary $\mc{A}'$ that corrupts $\corrupt'$, we can consider an adversary $\mc{A}''$ that corrupts a set $\corrupt''\supset\corrupt'$ with exactly $n-1$ parties and instructs the parties in $\corrupt'$ to deviate exactly like $\mc{A}'$ and the other $n-1-t'$ parties in $\corrupt''\setminus\corrupt'$ to follow the $\single$ protocol. 
\par
At this point, we will describe all the possible deviations of the corrupted parties in the set $\corrupt$ as instructed by $
\mc{A}$. The set $\corrupt$ can include either (i) the pool leader and $n-2$ other members of the pool, or (ii) $n-1$ members of the pool and not the pool leader.

 A round $T$ will be called \textit{payment round} if (i) the array of all blocks that were diffused during the previous round $T-1$ was not empty, and (ii) the array of all the blocks and the set of all the fruits diffused during the previous round $T-1$  do not contain a  block $\hat{\block}:=\langle\langle \hat{h}_{-1},\hat{h}_f,\hat{\nonce},$ $\hat{\msf{dig}},\hat{\record},\hat{h}\rangle,\hat{\mbf{F}}\rangle$
  or a fruit $\hat{\mbf{f}}:=\langle \hat{h}_{-1},\hat{h}_f,\hat{\nonce},\hat{\msf{dig}},\hat{\record},\hat{h}\rangle$, respectively, such that $(\hat{\previous},\hat{h}_f,\hat{\msf{dig}},\hat{\record})\neq(inst_1,inst_2,inst_3,inst_4)$.
%
%

The possible deviations that $\mc{A}$ can perform are any combination of the following strategies.
\begin{enumerate}
    \item[(D1)] $\mc{A}$ instructs a subset of the non leader parties in $\corrupt$ to deviate from step (4) of the $\single$ protocol for one or more rounds, by ignoring the instance received from the pool leader and creating a different instance. Note that this includes the case where the adversary instructs some adversarial parties to abandon the pool. The meaningful possible deviations that we should examine  are the following: 
    \begin{enumerate}
    \item[(i)] the adversarial party ignores the record $\record$ received from the pool leader and updates $inst_4$ with a new record including a transaction that makes the adversarial party as the recipient of the rewards. 
    \item[(ii)] the adversarial party updates $inst_1$ with a hash value of a block different from $h_{-1}$ received from the pool leader. This reflects the scenario where the adversarial party creates a fork.
    \item[(iii)] the adversarial party updates $inst_2$ with the hash of a block that is different from the hash value, $h_f$, of the block received from the pool leader.
    \item[(iv)] the adversarial party updates $inst_3$ with a digest of a fruit set that is different from $\digest(\mbf{F}_\msf{rec})$ received from the pool leader.
    \item[(v)] the adversarial party does not update one or more $inst_i$, $i\in[4]$.
    
    \end{enumerate}
    \item[(D2)] $\mc{A}$ instructs a subset of the non leader parties in $\mathbf{C}$ to ask $\mc{O}_\msf{ltx}$ no queries during one or more payment rounds.
    \item[(D3)] $\mc{A}$ instructs a subset of the parties in $\mathbf{C}$ to ask the oracle $\mc{O}_\msf{ro}$ fewer than $q$ queries during one or more rounds.
    \item[(D4)] $\mc{A}$ instructs a subset of the parties in $\mathbf{C}$ not to send the fruits or the blocks it produces to the Diffuse Functionality for one or more rounds (this is related to block withholding attacks, cf. \cite{ Rosenfeld2011AnalysisOB,7163020}).

    \item[(D5)] $\mc{A}$ instructs a subset of the parties in $\mathbf{C}$ to delay arbitrarily to send the fruits or the blocks it produces to the Diffuse Functionality for one or more rounds.
    \item[(D6)] $\mc{A}$ instructs a subset of the parties in $\mathbf{C}$ to abandon the pool and create a new pool that follows different instructions from the $\single$ protocol.
    \item[(D7)] $\mc{A}$ instructs a subset of the parties in $\mathbf{C}$ to remain in the pool even if it receives a block $\hat{\block}:=\langle\langle \hat{h}_{-1},\hat{h}_f,\hat{\nonce},$ $\hat{\msf{dig}},\hat{\record},\hat{h}\rangle,\hat{\mbf{F}}\rangle$ or a fruit $\langle \hat{h}_{-1},\hat{h}_f,\hat{\nonce},\hat{\msf{dig}},\hat{\record},\hat{h}\rangle$ such that $(\hat{\previous},\hat{h}_f,\hat{\msf{dig}},\hat{\record})\neq(inst_1,inst_2,inst_3,inst_4)$.
    \item[(D8)] $\mc{A}$ instructs a subset of the parties in $\mathbf{C}$ to abandon the pool and follow the $\protocolevp$ protocol.
    \item[(D9)] If $\mathbf{C}$ includes the pool leader, $\mc{A}$ instructs the pool leader to ask no query to $\mc{O}_\msf{fs}$ for one or more rounds.
    \item[(D10)] If $\mathbf{C}$ includes the pool leader, $\mc{A}$ instructs the pool leader to ask no query to $\mc{O}_\msf{tx}$ for one or more rounds.
    \item[(D11)] If $\mathbf{C}$ includes the pool leader, $\mc{A}$ instructs the pool leader to ask no query to $\mc{O}_\msf{lc}$ for one or more payment rounds.
    \item[(D12)] If $\mathbf{C}$ includes the pool leader, $\mc{A}$ instructs the pool leader for one or more payment rounds to  create a special transaction $\tx_T$ that pays the party that does not belong to $\corrupt$ a smaller amount than what described in 3(c) of $\single$ for the pool leader (cf. Figure~\ref{fig:single_leader}). 
   
\end{enumerate}
Note that deviations D1, D2 apply only to the corrupted non leader parties, deviations D3-D8 apply to all parties in $\corrupt$ and deviations D9-D12 apply only to the corrupted pool leader. 

First, we provide a lower bound that $\umin(\honexec)$ achieves with overwhelming probability.

\begin{claim}\label{claim:H}
If (i) $\pf\reward\geq\frac{\cost_\msf{lc}+\cost_\msf{fs}+\cost_\msf{tx}}{(1-\frac{\log\secpar}{\sqrt{n}}) nq}$ and (ii) $\pb=\Omega(\frac{1}{nq})$, then it holds that
\begin{equation*} 
\begin{split}
&\Pr\big[\umin(\honexec)\geq\big(1-\tfrac{\log\secpar}{\sqrt{\rounds n}}\big)(\rounds-\log^2\secpar)(n-1)q\pf\reward-\\
&-\tfrac{n-1}{n}\big(1+\tfrac{\log\secpar}{\sqrt{\rounds}}\big)\rounds(1-(1-\pb)^{nq})(\cost_\msf{lc}+n\cost_\msf{ltx})-\\
&-\big(\tfrac{n-1}{n}\rounds+\tfrac{\log^2\secpar}{n}\big)(\cost_\msf{fs}+\cost_\msf{tx})-\rounds (n-1)q\cost_\msf{ro}\big]\geq\\
&\geq 1-\negl(\secpar).
\end{split}
\end{equation*}
\end{claim}
\textit{Proof of Claim~\ref{claim:H}. }
We say that a block $\block$ is \emph{profitable} if the rewards that derive from the fruits included in $\block$ are higher than the pool leader cost of asking $\cost_\msf{lc},\cost_\msf{fs},\cost_\msf{tx}$ during the mining of $\block$. By the description of $\single$, if $\block$ is profitable, then $\party{L}$ shares the total profit of $\block$ evenly among all $n$ parties of the single pool. Otherwise, $\party{L}$ uses the rewards to cover (part of) the cost of $\block$ while the other parties receive no profit for $\block$.

We show that the probability that a block $\block$ is profitable under the execution $\honexec$ is overwhelming. 
Let $\rho$ be the number of rounds elapsed for mining $\block$. Since the number of the queries the single pool makes per round is $nq$, the number of fruits mined during the mining of $\block$, $Z_0$, follows $\msf{Bin}(\rho nq,\pf)$. By the Chernoff bounds (cf. Appendix~\ref{app:chernoff}),
\begin{equation*}
\Pr\big[Z_0<(1-\tfrac{\log\secpar}{\sqrt{n}})\rho nq\pf\big]\leq e^{-\frac{\log^2\secpar}{2n}\rho nq\pf}=\negl(\secpar).    
\end{equation*}
Thus, with $1-\negl(\secpar)$ probability, the rewards w.r.t. $\block$ are at least $(1-\frac{\log\secpar}{\sqrt{n}})\rho nq\pf\reward$. Besides, the leader cost for $\block$ is $\cost_\msf{lc}+\rho\cost_\msf{fs}+\rho\cost_\msf{tx}$. 
Since $\pf\reward\geq\frac{\cost_\msf{lc}+\cost_\msf{fs}+\cost_\msf{tx}}{(1-\frac{\log\secpar}{\sqrt{n}}) nq}$ and $\rho\geq 1$, we have that with $1-\negl(\secpar)$ probability, it holds that
\[(1-\tfrac{\log\secpar}{\sqrt{n}})\rho nq\pf\reward\geq\rho(\cost_\msf{lc}+\cost_\msf{fs}+\cost_\msf{tx})\geq\cost_\msf{lc}+\rho\cost_\msf{fs}+\rho\cost_\msf{tx},\]
i.e., $\Pr[\block\mbox{ is profitable}]\geq1-\negl(\secpar)$.\\[2pt]
%
%
Next, we define the following random variables. Let $Z$ be the number of mined fruits during $\honexec$ for the first $\rounds-\log^2\secpar$ rounds and $W$ be the number of rounds that at least one block was mined (i.e., the number of calls to $\mc{O}_\msf{lc}$, $\mc{O}_\msf{tlx}$). Since $nq$ mining queries are made per round, $Z\sim\msf{Bin}((\rounds-\log^2\secpar) nq,\pf)$. Besides, the probability that at least one block is produced in some round is $1-(1-\pb)^{nq}$, so $W\sim\msf{Bin}(\rounds,1-(1-\pb)^{nq})$.

Let $t\_reward$ be the total rewards of the single pool and $l\_cost$ be the leader costs in terms of queries to $\mc{O}_\msf{lc},\mc{O}_\msf{fs},\mc{O}_\msf{tx}$ that are shared among the members of the single pool. Let $c\_reward$ be the rewards of the coalition $\corrupt$ and $c\_cost$ be the additional cost that $\corrupt$ incurs besides its share of $l\_cost$.

As shown above, the probability that some block is not profitable is $\negl(\secpar)$. So, by the union bound, the probability that all the blocks are profitable in $\honexec$ is at least $1-\rounds\negl(\secpar)=1-\negl(\secpar)$. As the coalition consists of $n-1$ parties and when all blocks are profitable each party receives an equal profit share, we have that
\begin{equation}\label{eq:c_reward}
\Pr\big[c\_reward=\tfrac{n-1}{n}(t\_reward-l\_cost)\big]=1-\negl(\secpar).
\end{equation}
Moreover, the probability that no block is produced during the last $\log^2\secpar$ rounds is $(1-\pb)^{(\log^2\secpar)nq}=\negl(\secpar)$, for $\pb=\Omega(\frac{1}{nq})$. Thus, all fruits mined during the first $\rounds-\log^2\secpar$ rounds will be included in the chain with $1-\negl(\secpar)$ probability, so
\begin{equation}\label{eq:t_reward}
\Pr[t\_reward\geq Z\reward]=1-\negl(\secpar).    
\end{equation}
For the leader costs that are shared among the pool members, we have that $l\_cost\leq W\cost_\msf{lc}+r_0(\cost_\msf{fs}+\cost_\msf{tx})$, where $r_0$ is the round that the final block was mined in $\honexec$. By the Chernoff bounds, and since $W\sim\msf{Bin}(\rounds,1-(1-\pb)^{nq})$
\begin{equation*}
\begin{split}
&\Pr\big[W\geq\big(1+\tfrac{\log\secpar}{\sqrt{\rounds}}\big)\rounds(1-(1-\pb)^{nq})\big]
=\negl(\secpar).
\end{split}
\end{equation*}
Thus, we have that 
\begin{equation}\label{eq:l_cost}
\begin{split}
\Pr\big[l\_cost&\leq \big(1+\tfrac{\log\secpar}{\sqrt{\rounds}}\big)\rounds(1-(1-\pb)^{nq})\cost_\msf{lc}+r_0(\cost_\msf{fs}+\cost_\msf{tx})\big] =1-\negl(\secpar).
\end{split}
\end{equation}
Each party in the coalition makes also queries to $\cost_\msf{ltx}$ and $\cost_\msf{ro}$. In addition, in case the coalition includes $\party{L}$, we take into account the extra cost $(N-r_0)(\cost_\msf{fs}+\cost_\msf{tx})$ for $\party{L}$ in the last $(N-r_0)$ rounds where no block was produced. In any case, $c\_cost\leq W(n-1)\cost_\msf{ltx}+\rounds(n-1)q\cost_\msf{ro}+(N-r_0)(\cost_\msf{fs}+\cost_\msf{tx})$. By the Chernoff bounds,
\begin{equation}\label{eq:c_cost}
\begin{split}
&\Pr\big[c\_cost\leq \big(1+\tfrac{\log\secpar}{\sqrt{\rounds}}\big)\rounds(1-(1-\pb)^{nq})(n-1)\cost_\msf{ltx}+\\
&+\rounds (n-1)q\cost_\msf{ro}+(N-r_0)(\cost_\msf{fs}+\cost_\msf{tx}) \big]=1-\negl(\secpar).    
\end{split}
\end{equation}
By Eq.~\eqref{eq:c_reward},~\eqref{eq:t_reward},~\eqref{eq:l_cost},~\eqref{eq:c_cost} and for some lower bound $B$ (to be defined), we get that
\begin{equation}\label{eq:H_final}
\begin{split}
&\Pr[\umin(\honexec)\geq B]=\Pr[c\_reward-c\_cost\geq B]\geq \\
\geq&\Pr\big[\tfrac{n-1}{n}(t\_reward-l\_cost)-c\_cost\geq B\big]-\negl(\secpar)\geq\\
\geq&\Pr\big[\tfrac{n-1}{n}\Big(Z\reward-\big(1+\tfrac{\log\secpar}{\sqrt{\rounds}}\big)\rounds(1-(1-\pb)^{nq})\cost_\msf{lc}-r_0(\cost_\msf{fs}+\cost_\msf{tx})\Big)-\\
&-\big(1+\tfrac{\log\secpar}{\sqrt{\rounds}}\big)\rounds(1-(1-\pb)^{nq})(n-1)\cost_\msf{ltx}-\\
&-\rounds (n-1)q\cost_\msf{ro}-(N-r_0)(\cost_\msf{fs}+\cost_\msf{tx})\geq B\big]-\\
&-\negl(\secpar)\geq\\
\geq&\Pr\big[\tfrac{n-1}{n}\Big(Z\reward-\big(1+\tfrac{\log\secpar}{\sqrt{\rounds}}\big)\rounds(1-(1-\pb)^{nq})\cost_\msf{lc}\Big)-\big(\tfrac{n-1}{n}r_0+(\rounds-r_0)\big)(\cost_\msf{fs}+\cost_\msf{tx})-\\
&-\big(1+\tfrac{\log\secpar}{\sqrt{\rounds}}\big)\rounds(1-(1-\pb)^{nq})(n-1)\cost_\msf{ltx}-\rounds (n-1)q\cost_\msf{ro}\geq B\big]-\\
&-\negl(\secpar)=\\
\geq&\Pr\big[Z\reward\geq \tfrac{n}{n-1}B+\big(1+\tfrac{\log\secpar}{\sqrt{\rounds}}\big)\rounds(1-(1-\pb)^{nq})(\cost_\msf{lc}+n\cost_\msf{ltx})+\\
& +\tfrac{n}{n-1}\big(\tfrac{n-1}{n}r_0+(\rounds-r_0)\big)(\cost_\msf{fs}+\cost_\msf{tx})+\rounds nq\cost_\msf{ro}\big]-\negl(\secpar).
\end{split}   
\end{equation}
As already shown, with $1-\negl(\secpar)$ probability, it holds that $r_0\geq\rounds-\log^2\secpar$. Therefore,
\begin{equation}\label{eq:CfsCtx}
\tfrac{n-1}{n}r_0+(\rounds-r_0)=\rounds-\tfrac{1}{n}r_0\leq\tfrac{n-1}{n}\rounds+\tfrac{\log^2\secpar}{n}.
\end{equation}
So, by setting 
\begin{equation*}
\begin{split}
&\tfrac{n}{n-1}B+\big(1+\tfrac{\log\secpar}{\sqrt{\rounds}}\big)\rounds(1-(1-\pb)^{nq})(\cost_\msf{lc}+n\cost_\msf{ltx})+\\
&+\tfrac{n}{n-1}\big(\tfrac{n-1}{n}\rounds+\tfrac{\log^2\secpar}{n}\big)(\cost_\msf{fs}+\cost_\msf{tx})+\rounds nq\cost_\msf{ro}=\\
&=\big(1-\tfrac{\log\secpar}{\sqrt{\rounds n}}\big)(\rounds-\log^2\secpar)nq\pf\reward\Leftrightarrow\\
\Leftrightarrow& B:=\big(1-\tfrac{\log\secpar}{\sqrt{\rounds n}}\big)(\rounds-\log^2\secpar)(n-1)q\pf\reward-\\
&-\tfrac{n-1}{n}\big(1+\tfrac{\log\secpar}{\sqrt{\rounds}}\big)\rounds(1-(1-\pb)^{nq})(\cost_\msf{lc}+n\cost_\msf{ltx})-\\
&-\big(\tfrac{n-1}{n}\rounds+\tfrac{\log^2\secpar}{n}\big)(\cost_\msf{fs}+\cost_\msf{tx})-\rounds (n-1)q\cost_\msf{ro},
\end{split}    
\end{equation*}
and by Eq.~\eqref{eq:H_final},~\eqref{eq:CfsCtx} and the Chernoff bounds, we conclude that
\begin{equation*}
\begin{split}
&\Pr[\umin(\honexec)\geq B]\geq\\
\geq&\Pr\big[Z\geq\big(1-\tfrac{\log\secpar}{\sqrt{\rounds n}}\big)(\rounds-\log^2\secpar)nq\pf\big]-\negl(\secpar)\geq\\
\geq&\big(1-e^{\frac{\log^2\secpar}{\rounds n}(\rounds-\log^2\secpar)nq\pf})-\negl(\secpar)\geq1-\negl(\secpar).
\end{split}    
\end{equation*}
$\quad$\hfill$\dashv$

Next, we provide an upper bound for $\umax(\exec{})$ when $\mc{A}$'s strategy derives by combining deviations D2, D3, D8.

\begin{claim}\label{claim:D2_D3_D8}
Let $\mc{A}$ be an adversary whose strategy comprises a combination of deviations D2, D3, and D8.

If (i) $\pf\reward>\mathrm{max}\Big\{\tfrac{\cost_\msf{lc}+\cost_\msf{fs}+\cost_\msf{tx}}{(1-\frac{\log\secpar}{\sqrt[4]{n}}) \sqrt{n}q},3\big(\tfrac{\cost_\msf{lc}+\cost_\msf{fs}+\cost_\msf{tx}}{(n-1)q}+\cost_\msf{ro}\big)\Big\}$, and (ii) $\pf<\tfrac{1}{2}$, then for any $\delta\in \big[\tfrac{\log\secpar}{\sqrt[4]{\rounds n}},1\big)$, it holds that
\begin{equation*}
\begin{split}
&\Pr\big[\umax(\exec{})\leq (1+\delta)(\rounds-1)(n-1)q\pf\reward+\log^2\secpar(n-1)q\reward-\\
&-\tfrac{n-1}{n}\big(1-\tfrac{\log\secpar}{\sqrt{\rounds}}\big)(\rounds-1)(1-(1-\pb)^{nq})\cost_\msf{lc}-\\
&-\tfrac{n-1}{n}\rounds(\cost_\msf{fs}+\cost_\msf{tx})-\rounds (n-1)q\cost_\msf{ro}\big]\geq1-\negl(\secpar).
\end{split}
\end{equation*}

\end{claim}

\textit{Proof of Claim~\ref{claim:D2_D3_D8}.} Since it comprises a combination of deviations D2, D3, and D8, $\mc{A}$'s strategy can be generally described as follows: Up to some round $r^*$, $\mc{A}$ may instruct the coalition $\corrupt$ to make fewer queries to $\mc{O}_\msf{ltx}$ and $\mc{O}_\msf{ro}$ while remaining members of the single pool. After $r^*$, $\mc{A}$ instructs the coalition to abandon the pool and follow $\protocolevp$ with the difference that the corrupted parties may again make fewer queries to $\mc{O}_\msf{ltx}$ and $\mc{O}_\msf{ro}$.

Let $Q\leq(n-1)q$ be the total number of queries to $\mc{O}_\msf{ro}$ of the coalition per round. Without loss of generality (since we want to upper bound the profit of $\mc{A}$), we assume that the corrupted parties make no queries to $\mc{O}_\msf{ltx}$. We define the following random variables:

 Let $W^-$ be the number of rounds before $r^*$ that at least one block was mined (i.e., the number of calls to $\mc{O}_\msf{lc}$ up to $r^*$). Since the remaining honest party makes $q$ queries to the random oracle per round, the probability that at least one block is produced in some round is $1-(1-\pb)^{q+Q}$, so $W^-\sim\msf{Bin}(r^*-1,1-(1-\pb)^{q+Q})$.  

Let $Z^-$ be the number of fruits mined before $r^*$. Since $q+Q$ queries are made by all parties per round,  $Z^-\sim\msf{Bin}((r^*-1)(q+Q),\pf)$.

Let $t\_reward$ be the total rewards of the single pool up to $r^*$ and $l\_cost$ be the leader costs up to $r^*$ in terms of queries to $\mc{O}_\msf{lc},\mc{O}_\msf{fs},\mc{O}_\msf{tx}$ that are shared among the members of the single pool. Let $c\_reward^-$ be the rewards of $\corrupt$ up to $r^*$ and $c\_cost^-$ be the additional cost up to $r^*$ that $\corrupt$ incurs besides its share of $l\_cost$.

Let $W^+$ be the number of rounds from $r^*$ to $\rounds-1$ that at least one block was mined (i.e., the number of calls to $\mc{O}_\msf{lc}$ after $r^*$). It holds that $W^+\sim\msf{Bin}(\rounds-r^*,1-(1-\pb)^{q+Q})$.

Let $Z^+$ be the number of fruits mined by $\corrupt$ from $r^*$ to $\rounds-1$. Since the parties in $\corrupt$ ask $Q$ queries per round, it holds that $Z^+\sim\msf{Bin}((\rounds-r^*)Q,\pf)$.

Let $c\_reward^+$ be the rewards of $\corrupt$ after $r^*$ and $c\_cost^+$ be the total cost that $\corrupt$ incurs after $r^*$.

Assume that $Q\geq\sqrt{n}q$ (the case where $Q<\sqrt{n}q$ will be studied later). Similarly to Claim~\ref{claim:H}, we can show that if $\pf\reward\geq\frac{\cost_\msf{lc}+\cost_\msf{fs}+\cost_\msf{tx}}{(1-\frac{\log\secpar}{\sqrt[4]{n}}) \sqrt{n}q}$, then with $1-\negl(\secpar)$ probability, all blocks of the execution are profitable. In particular, let $\rho$ be the number of rounds elapsed for mining a block $\block$. Since the number of the queries that all parties make per round is $q+Q$, the number of fruits mined during the mining of $\block$, $Z_0$, follows $\msf{Bin}(\rho (q+Q),\pf)$. By the Chernoff bounds and given that $Q\geq \sqrt{n}q$,
\begin{equation*}
\begin{split}
\Pr\big[Z_0<(1-\tfrac{\log\secpar}{\sqrt[4]{n}})\rho (q+Q)\pf\big]&\leq e^{-\frac{\log^2\secpar}{2\sqrt{n}}\rho (q+Q)\pf}\leq e^{-\frac{\log^2\secpar}{2\sqrt{n}}\rho \sqrt{n}q\pf}=\negl(\secpar). 
\end{split}
\end{equation*}
So, with $1-\negl(\secpar)$ probability, the rewards w.r.t. $\block$ are at least $(1-\frac{\log\secpar}{\sqrt[4]{n}})\rho (q+Q)\pf\reward$. Besides, the leader cost for $\block$ is $\cost_\msf{lc}+\rho\cost_\msf{fs}+\rho\cost_\msf{tx}$. 
Since $\pf\reward\geq\frac{\cost_\msf{lc}+\cost_\msf{fs}+\cost_\msf{tx}}{(1-\frac{\log\secpar}{\sqrt[4]{n}}) \sqrt{n}q}$ and $\rho\geq 1$, we have that with $1-\negl(\secpar)$ probability, it holds that
\begin{equation*}
\begin{split}
&(1-\tfrac{\log\secpar}{\sqrt[4]{n}})\rho (q+Q)\pf\reward>(1-\tfrac{\log\secpar}{\sqrt[4]{n}})\rho \sqrt{n}q\pf\reward\geq\\
\geq&\rho(\cost_\msf{lc}+\cost_\msf{fs}+\cost_\msf{tx})\geq\cost_\msf{lc}+\rho\cost_\msf{fs}+\rho\cost_\msf{tx},
\end{split}
\end{equation*}
i.e., $\Pr[\block\mbox{ is profitable}]\geq1-\negl(\secpar)$.

Thus, since by definition, $t\_reward\leq Z^-\reward$ and $l\_cost=W^-\cost_\msf{lc}+r^*(\cost_\msf{fs}+\cost_\msf{tx})$, and given that $\corrupt$ has $n-1$ parties, we have that
with $1-\negl(\secpar)$ probability,
\begin{equation}\label{eq:c_reward-}
c\_reward^-\leq\tfrac{n-1}{n}\big(Z^-\reward-W^-\cost_\msf{lc}+r^*(\cost_\msf{fs}+\cost_\msf{tx})\big).
\end{equation}
Besides, we directly get that 
\begin{equation}\label{eq:c_cost-}
c\_cost^-\geq 0\cost_\msf{ltx}+r^*Q\cost_\msf{ro}=r^*Q\cost_\msf{ro}.
\end{equation}
Upon abandoning the pool, for the coalition $\corrupt$ it holds that
\begin{equation}\label{eq:c_reward+}
c\_reward^+\leq Z^+\reward
\end{equation}
\begin{equation}\label{eq:c_cost+}
c\_cost^+\geq W^+\cost_\msf{lc}+(\rounds-r^*)(\cost_\msf{fs}+\cost_\msf{tx})-(\rounds-r^*)Q\cost_\msf{ro}
\end{equation}
The above lower bound for $c\_cost^+$ holds because $\mc{A}$ follows a combination of D2, D3, and D8, so for every round after $r^*$, there is at least one corrupted party that interacts with $\mc{O}_\msf{lc},\mc{O}_\msf{fs},\mc{O}_\msf{tx}$ according to $\protocolevp$ (on behalf of $\corrupt$).

By Eq.~\eqref{eq:c_reward-},~\eqref{eq:c_cost-},~\eqref{eq:c_reward+},~\eqref{eq:c_cost+} and for lower bound $B$ (to be defined), we have that
\begin{equation*}
\begin{split}
&\Pr[\umax(\exec{})\geq B]=\\
=&\Pr[(c\_reward^- - c\_cost^-)+(c\_reward^+-c\_cost^+)\geq B]\leq\\
\leq&\Pr\big[(\tfrac{n-1}{n}Z^-+Z^+)\reward-(\tfrac{n-1}{n}W^-+W^+)\cost_\msf{lc}-\\
&-\big(\tfrac{n-1}{n}r^*(\cost_\msf{fs}+\cost_\msf{tx})+(\rounds-r^*)(\cost_\msf{fs}+\cost_\msf{tx})\big)-\rounds Q\cost_\msf{ro}\geq B\big]+\negl(\secpar).\\
\leq&\Pr\big[(\tfrac{n-1}{n}Z^-+Z^+)\reward-\tfrac{n-1}{n}(W^-+W^+)\cost_\msf{lc}-\tfrac{n-1}{n}\rounds(\cost_\msf{fs}+\cost_\msf{tx})-\rounds Q\cost_\msf{ro}\geq B\big]+\negl(\secpar).
\end{split}
\end{equation*}
Now observe that the random variable $W^-+W^+$ follows $\msf{Bin}((r^*-1)+(\rounds-r^*),1-(1-\pb)^{q+Q})$, i.e., $W^-+W^+\sim\msf{Bin}(\rounds-1,1-(1-\pb)^{q+Q})$. So, by the Chernoff bounds,
\begin{equation*}
\Pr\big[W^-+W^+\leq\big(1-\tfrac{\log\secpar}{\sqrt{\rounds}}\big)(\rounds-1)(1-(1-\pb)^{q+Q})\big]=\negl(\secpar).
\end{equation*}
Hence, we have that
\begin{equation}\label{eq:bound_all_cases}
\begin{split}
&\Pr[\umax(\exec{})\geq B]\leq\Pr\big[(\tfrac{n-1}{n}Z^-+Z^+)\reward\geq B+\\
&+\tfrac{n-1}{n}\big(1-\tfrac{\log\secpar}{\sqrt{\rounds}}\big)(\rounds-1)(1-(1-\pb)^{q+Q})\cost_\msf{lc}+\tfrac{n-1}{n}\rounds(\cost_\msf{fs}+\cost_\msf{tx})+\rounds Q\cost_\msf{ro}\big]+\negl(\secpar).
\end{split}
\end{equation}
We study the following cases for the value $r^*$:

\textbf{Case 1:} $r^*<\log^2\secpar$. Since $Z^-\leq(r^*-1)(q+Q)$ and by Eq.~\eqref{eq:bound_all_cases},
\begin{equation*}
\begin{split}
&\Pr[\umax(\exec{})\geq B]\leq\Pr\big[Z^+\reward\geq B+\\
&+\tfrac{n-1}{n}\big(1-\tfrac{\log\secpar}{\sqrt{\rounds}}\big)(\rounds-1)(1-(1-\pb)^{q+Q})\cost_\msf{lc}+\\
&+\tfrac{n-1}{n}\rounds(\cost_\msf{fs}+\cost_\msf{tx})+\rounds Q\cost_\msf{ro}-\tfrac{n-1}{n}(r^*-1)(q+Q)\reward\big]+\\
&+\negl(\secpar).
\end{split}
\end{equation*}
To apply the Chernoff bounds for $Z^+$, we want to set $B$ such that for every $Q$, it holds that
\begin{equation*}
\begin{split}
& B+\tfrac{n-1}{n}\big(1-\tfrac{\log\secpar}{\sqrt{\rounds}}\big)(\rounds-1)(1-(1-\pb)^{q+Q})\cost_\msf{lc}+\\
&+\tfrac{n-1}{n}\rounds(\cost_\msf{fs}+\cost_\msf{tx})+\rounds Q\cost_\msf{ro}-\\
&-\tfrac{n-1}{n}(r^*-1)(q+Q)\reward\geq\big(1+\tfrac{\log\secpar}{\sqrt[4]{\rounds n}}\big)(\rounds-r^*)Q\pf\reward\Leftrightarrow\\
\Leftrightarrow&B\geq\Big(\tfrac{n-1}{n}\big(1-\tfrac{\log\secpar}{\sqrt{\rounds}}\big)(\rounds-1)(1-\pb)^q\cost_\msf{lc}\Big)(1-\pb)^Q+\\
+&\Big(\big(1+\tfrac{\log\secpar}{\sqrt[4]{\rounds n}}\big)(\rounds-r^*)\pf\reward+\tfrac{n-1}{n}(r^*-1)\reward-\rounds \cost_\msf{ro}\Big)Q+\\
+&\tfrac{n-1}{n}(r^*-1)q\reward-\tfrac{n-1}{n}\big(1-\tfrac{\log\secpar}{\sqrt{\rounds}}\big)(\rounds-1)\cost_\msf{lc}-\tfrac{n-1}{n}\rounds(\cost_\msf{fs}+\cost_\msf{tx}).
\end{split}
\end{equation*}
We observe that the right term of the above inequality can be expressed as function of $Q$ of the form $f(Q)=a\cdot x^Q+b\cdot Q+c$. 

Next, we show that, if $\pf\reward>2\cost_\msf{ro}$,
then $f(Q)$ has a maximum at $(n-1)q$ in the range  $[0,(n-1)q]$, i.e,. when the coalition asks all available queries. In particular, we want to set $B$ as an upper bound of
\begin{equation*}
\begin{split}
&\Big(\tfrac{n-1}{n}\big(1-\tfrac{\log\secpar}{\sqrt{\rounds}}\big)(\rounds-1)(1-\pb)^q\cost_\msf{lc}\Big)(1-\pb)^Q+\\
+&\Big(\big(1+\tfrac{\log\secpar}{\sqrt[4]{\rounds n}}\big)(\rounds-r^*)\pf\reward+\tfrac{n-1}{n}(r^*-1)\reward-\rounds \cost_\msf{ro}\Big)Q+\\
+&\tfrac{n-1}{n}(r^*-1)q\reward-\tfrac{n-1}{n}\big(1-\tfrac{\log\secpar}{\sqrt{\rounds}}\big)(\rounds-1)\cost_\msf{lc}-\tfrac{n-1}{n}\rounds(\cost_\msf{fs}+\cost_\msf{tx}).
\end{split}
\end{equation*}
To do so, we study the function $f(Q)=a\cdot x^Q+b\cdot Q+c$, where
\begin{align*}
x&=1-\pb\\
    a&=\tfrac{n-1}{n}\big(1-\tfrac{\log\secpar}{\sqrt{\rounds}}\big)(\rounds-1)(1-\pb)^q\cost_\msf{lc}\\
    b&=\big(1+\tfrac{\log\secpar}{\sqrt[4]{\rounds n}}\big)(\rounds-r^*)\pf\reward+\tfrac{n-1}{n}(r^*-1)\reward-\rounds \cost_\msf{ro}\\
    c&=\tfrac{n-1}{n}(r^*-1)q\reward-\tfrac{n-1}{n}\big(1-\tfrac{\log\secpar}{\sqrt{\rounds}}\big)(\rounds-1)\cost_\msf{lc}-\tfrac{n-1}{n}\rounds(\cost_\msf{fs}+\cost_\msf{tx})
\end{align*}
If $\pf\reward>2\cost_\msf{ro}>\tfrac{Nn}{(N-1)(n-1)}\cost_\msf{ro}$, then it is easy to see that 
\begin{equation*}
\begin{split}
b&=\big(1+\tfrac{\log\secpar}{\sqrt[4]{\rounds n}}\big)(\rounds-r^*)\pf\reward+\tfrac{n-1}{n}(r^*-1)\reward-\rounds \cost_\msf{ro}>\\
&>\tfrac{n-1}{n}(\rounds-r^*)\pf\reward+\tfrac{n-1}{n}(r^*-1)\pf\reward-\rounds \cost_\msf{ro}=\\
&=\tfrac{n-1}{n}(\rounds-1)\pf\reward-\rounds \cost_\msf{ro}>0.
\end{split}    
\end{equation*}
In order to find the maximum of $f(Q)$ for $Q\in[0,(n-1)q]$, we compute
\begin{equation*}
\begin{split}
&f'(Q)=0\Rightarrow  a\cdot\ln x\cdot x^Q+b=0\Rightarrow Q=\dfrac{\ln\big(\frac{b}{a\cdot\ln (1/x)}\big)}{\ln x}
\end{split}    
\end{equation*}
Since $b>0$ and $\ln x<0$, we have that $f'$ is increasing. Thus, $\frac{\ln\big(\frac{b}{a\cdot\ln (1/x)}\big)}{\ln x}$ is a minimum for $f$. In addition, $p_b$ is typically a small value so $x$ is close to $1$. Consequently, we may assume that $\ln (1/x)$ is sufficiently small so that $\frac{b}{a\cdot\ln (1/x)}>1$. The latter implies that $\frac{\ln\big(\frac{b}{a\cdot\ln (1/x)}\big)}{\ln x}<0$, so given that $f'$ is increasing, we get that $f'(Q)>0$ for $Q\in[0,(n-1)q]$. Therefore, the maximum of $f$ in $[0,(n-1)q]$ is $(n-1)q$. 

By the above, we have that
\begin{equation*}
\begin{split}
f(Q)&\leq\Big(\tfrac{n-1}{n}\big(1-\tfrac{\log\secpar}{\sqrt{\rounds}}\big)(\rounds-1)(1-\pb)^q\cost_\msf{lc}\Big)(1-\pb)^{(n-1)q}+\\
&\quad+\Big(\big(1+\tfrac{\log\secpar}{\sqrt[4]{\rounds n}}\big)(\rounds-r^*)\pf\reward+\tfrac{n-1}{n}(r^*-1)\reward-\rounds \cost_\msf{ro}\Big)(n-1)q+\\
&\quad+\tfrac{n-1}{n}(r^*-1)q\reward-\tfrac{n-1}{n}\big(1-\tfrac{\log\secpar}{\sqrt{\rounds}}\big)(\rounds-1)\cost_\msf{lc}-\tfrac{n-1}{n}\rounds(\cost_\msf{fs}+\cost_\msf{tx})=\\
&=\big(1+\tfrac{\log\secpar}{\sqrt[4]{\rounds n}}\big)(\rounds-r^*)(n-1)q\pf\reward+(r^*-1)(n-1)q\reward-\\
&\quad-\tfrac{n-1}{n}\big(1-\tfrac{\log\secpar}{\sqrt{\rounds}}\big)(\rounds-1)(1-(1-\pb)^{nq})\cost_\msf{lc}-\tfrac{n-1}{n}\rounds(\cost_\msf{fs}+\cost_\msf{tx})-\rounds (n-1)q\cost_\msf{ro}.
\end{split}
\end{equation*}
Moreover, given that $\pf<\frac{1}{2}<\frac{1}{1+\tfrac{\log\secpar}{\sqrt[4]{\rounds n}}}$ and $r^*<\log^2\secpar$, we have that 
\begin{equation*}
\begin{split}
&\big(1+\tfrac{\log\secpar}{\sqrt[4]{\rounds n}}\big)(\rounds-r^*)(n-1)q\pf\reward+(r^*-1)(n-1)q\reward=\\
=&\big(1+\tfrac{\log\secpar}{\sqrt[4]{\rounds n}}\big)\rounds(n-1)q\pf\reward+\big(1-\big(1+\tfrac{\log\secpar}{\sqrt[4]{\rounds n}}\big)\pf\big)(n-1)q\reward r^*-(n-1)q\reward<\\
<&\big(1+\tfrac{\log\secpar}{\sqrt[4]{\rounds n}}\big)\rounds(n-1)q\pf\reward+\big(1-\big(1+\tfrac{\log\secpar}{\sqrt[4]{\rounds n}}\big)\pf\big)(n-1)q\reward \log^2\secpar-(n-1)q\reward=\\
=&\big(1+\tfrac{\log\secpar}{\sqrt[4]{\rounds n}}\big)(\rounds-\log^2\secpar)(n-1)q\pf\reward+(\log^2\secpar-1)(n-1)q\reward.
\end{split}
\end{equation*}
Therefore, we set the upper bound for $f(Q)$ as
\begin{equation}\label{eq:bound_case_1}
\begin{split}
B&=\big(1+\tfrac{\log\secpar}{\sqrt[4]{\rounds n}}\big)(\rounds-\log^2\secpar)(n-1)q\pf\reward+(\log^2\secpar-1)(n-1)q\reward-\\
&\quad-\tfrac{n-1}{n}\big(1-\tfrac{\log\secpar}{\sqrt{\rounds}}\big)(\rounds-1)(1-(1-\pb)^{nq})\cost_\msf{lc}-\tfrac{n-1}{n}\rounds(\cost_\msf{fs}+\cost_\msf{tx})-\rounds (n-1)q\cost_\msf{ro}.
\end{split}
\end{equation}
For this value of $B$ and by the Chernoff bounds, we have that %
\begin{equation*}
\begin{split}
&\Pr[\umax(\exec{})\geq B]\leq\\
\leq&\Pr\big[Z^+\geq\big(1+\tfrac{\log\secpar}{\sqrt[4]{\rounds n}}\big)(\rounds-r^*)Q\pf\big]+\negl(\secpar)\leq\\
\leq&e^{-\frac{\log^2\secpar}{3\sqrt{\rounds n}}(\rounds-r^*)Q\pf}+\negl(\secpar)\leq\\
\leq&e^{-\frac{\log^2\secpar}{3\sqrt{\rounds n}}(\rounds-\log^2\secpar)\sqrt{n}q\pf}+\negl(\secpar)\leq\negl(\secpar).
\end{split}
\end{equation*}

\textbf{Case 2:} $\rounds-r^*<\log^2\secpar$. By the fact that $Z^+<(\rounds-r^*)Q$, by Eq.~\eqref{eq:bound_all_cases}, we get that  
\begin{equation*}
\begin{split}
&\Pr[\umax(\exec{})\geq B]\leq\Pr\big[\tfrac{n-1}{n}Z^-\reward\geq B+\\
&+\tfrac{n-1}{n}\big(1-\tfrac{\log\secpar}{\sqrt{\rounds}}\big)(\rounds-1)(1-(1-\pb)^{q+Q})\cost_\msf{lc}+\\
&+\tfrac{n-1}{n}\rounds(\cost_\msf{fs}+\cost_\msf{tx})+\rounds Q\cost_\msf{ro}-(\rounds-r^*)Q\reward\big]+\\
&+\negl(\secpar).
\end{split}
\end{equation*}
To apply the Chernoff bounds for $Z^-$, we want to set $B$ such that for every $Q$, it holds that
\begin{equation*}
\begin{split}
& \tfrac{n}{n-1}B+\big(1-\tfrac{\log\secpar}{\sqrt{\rounds}}\big)(\rounds-1)(1-(1-\pb)^{q+Q})\cost_\msf{lc}+\\
&+\rounds(\cost_\msf{fs}+\cost_\msf{tx})+\tfrac{n}{n-1}\rounds Q\cost_\msf{ro}-\\
&-\tfrac{n}{n-1}(\rounds-r^*)Q\reward\geq\big(1+\tfrac{\log\secpar}{\sqrt[4]{\rounds n}}\big)(r^*-1)(q+Q)\pf\reward\Leftrightarrow\\
\Leftrightarrow&B\geq\Big(\tfrac{n-1}{n}\big(1-\tfrac{\log\secpar}{\sqrt{\rounds}}\big)(\rounds-1)(1-\pb)^q\cost_\msf{lc}\Big)(1-\pb)^Q+\\
+&\Big(\tfrac{n-1}{n}\big(1+\tfrac{\log\secpar}{\sqrt[4]{\rounds n}}\big)(r^*-1)\pf\reward+(\rounds-r^*)\reward-\rounds \cost_\msf{ro}\Big)Q+\\
+&\tfrac{n-1}{n}\big(1+\tfrac{\log\secpar}{\sqrt[4]{\rounds n}}\big)(r^*-1)q\pf\reward-\tfrac{n-1}{n}\big(1-\tfrac{\log\secpar}{\sqrt{\rounds}}\big)(\rounds-1)\cost_\msf{lc}-\tfrac{n-1}{n}\rounds(\cost_\msf{fs}+\cost_\msf{tx}).
\end{split}
\end{equation*}
Namely, we want to set $B$ as an upper bound of
\begin{equation*}
\begin{split}
&\Big(\tfrac{n-1}{n}\big(1-\tfrac{\log\secpar}{\sqrt{\rounds}}\big)(\rounds-1)(1-\pb)^q\cost_\msf{lc}\Big)(1-\pb)^Q+\\
+&\Big(\tfrac{n-1}{n}\big(1+\tfrac{\log\secpar}{\sqrt[4]{\rounds n}}\big)(r^*-1)\pf\reward+(\rounds-r^*)\reward-\rounds \cost_\msf{ro}\Big)Q+\\
+&\tfrac{n-1}{n}\big(1+\tfrac{\log\secpar}{\sqrt[4]{\rounds n}}\big)(r^*-1)q\pf\reward-\tfrac{n-1}{n}\big(1-\tfrac{\log\secpar}{\sqrt{\rounds}}\big)(\rounds-1)\cost_\msf{lc}-\tfrac{n-1}{n}\rounds(\cost_\msf{fs}+\cost_\msf{tx}).
\end{split}
\end{equation*}
To do so, we study the function $g(Q)=a'\cdot x^Q+b'\cdot Q+c'$, where
\begin{align*}
x&=1-\pb\\
    a'&=\tfrac{n-1}{n}\big(1-\tfrac{\log\secpar}{\sqrt{\rounds}}\big)(\rounds-1)(1-\pb)^q\cost_\msf{lc}\\
    b'&=\tfrac{n-1}{n}\big(1+\tfrac{\log\secpar}{\sqrt[4]{\rounds n}}\big)(r^*-1)\pf\reward+(\rounds-r^*)\reward-\rounds \cost_\msf{ro}\\
    c'&=\tfrac{n-1}{n}\big(1+\tfrac{\log\secpar}{\sqrt[4]{\rounds n}}\big)(r^*-1)q\pf\reward-\tfrac{n-1}{n}\big(1-\tfrac{\log\secpar}{\sqrt{\rounds}}\big)(\rounds-1)\cost_\msf{lc}-\tfrac{n-1}{n}\rounds(\cost_\msf{fs}+\cost_\msf{tx})
\end{align*}
If $\pf\reward>2\cost_\msf{ro}>\tfrac{Nn}{(N-1)(n-1)}\cost_\msf{ro}$, then it is easy to see that 
\begin{equation*}
\begin{split}
b'&=\tfrac{n-1}{n}\big(1+\tfrac{\log\secpar}{\sqrt[4]{\rounds n}}\big)(r^*-1)\pf\reward+(\rounds-r^*)\reward-\rounds \cost_\msf{ro}>\\
&>\tfrac{n-1}{n})(r^*-1)\pf\reward+(\rounds-r^*)\pf\reward-\rounds \cost_\msf{ro}>\\
&>\tfrac{n-1}{n})(\rounds-1)\pf\reward-\rounds \cost_\msf{ro}>0.
\end{split}    
\end{equation*}
In order to find the maximum of $g(Q)$ for $Q\in[0,(n-1)q]$, we compute
\begin{equation*}
\begin{split}
&g'(Q)=0\Rightarrow  a'\cdot\ln x\cdot x^Q+b'=0\Rightarrow Q=\dfrac{\ln\big(\frac{b'}{a'\cdot\ln (1/x)}\big)}{\ln x}
\end{split}    
\end{equation*}
Just like function $f$ in Case 1, we can conclude that the maximum of $g$ in $[0,(n-1)q]$ is $(n-1)q$. Thus, we have that
\begin{equation*}
\begin{split}
g(Q)&\leq\Big(\tfrac{n-1}{n}\big(1-\tfrac{\log\secpar}{\sqrt{\rounds}}\big)(\rounds-1)(1-\pb)^q\cost_\msf{lc}\Big)(1-\pb)^{(n-1)q}+\\
&\quad+\Big(\tfrac{n-1}{n}\big(1+\tfrac{\log\secpar}{\sqrt[4]{\rounds n}}\big)(r^*-1)\pf\reward+(\rounds-r^*)\reward-\rounds \cost_\msf{ro}\Big)(n-1)q+\\
&\quad+\tfrac{n-1}{n}\big(1+\tfrac{\log\secpar}{\sqrt[4]{\rounds n}}\big)(r^*-1)q\pf\reward-\tfrac{n-1}{n}\big(1-\tfrac{\log\secpar}{\sqrt{\rounds}}\big)(\rounds-1)\cost_\msf{lc}-\tfrac{n-1}{n}\rounds(\cost_\msf{fs}+\cost_\msf{tx})=\\
&=\big(1+\tfrac{\log\secpar}{\sqrt[4]{\rounds n}}\big)(r^*-1)(n-1)q\pf\reward+(\rounds-r^*)(n-1)q\reward-\\
&\quad-\tfrac{n-1}{n}\big(1-\tfrac{\log\secpar}{\sqrt{\rounds}}\big)(\rounds-1)(1-(1-\pb)^{nq})\cost_\msf{lc}-\tfrac{n-1}{n}\rounds(\cost_\msf{fs}+\cost_\msf{tx})-\rounds (n-1)q\cost_\msf{ro}.
\end{split}
\end{equation*}
Given that $\pf<\frac{1}{2}<\frac{1}{1+\tfrac{\log\secpar}{\sqrt[4]{\rounds n}}}$ and $\rounds-r^*<\log^2\secpar$, we have that 
\begin{equation*}
\begin{split}
&\big(1+\tfrac{\log\secpar}{\sqrt[4]{\rounds n}}\big)(r^*-1)(n-1)q\pf\reward+(\rounds-r^*)(n-1)q\reward=\\
=&\rounds(n-1)q\reward-\big(1-\big(1+\tfrac{\log\secpar}{\sqrt[4]{\rounds n}}\big)\pf\big)(n-1)q\reward r^*-\big(1+\tfrac{\log\secpar}{\sqrt[4]{\rounds n}}\big)(n-1)q\pf\reward<\\
<&\rounds(n-1)q\reward-\big(1-\big(1+\tfrac{\log\secpar}{\sqrt[4]{\rounds n}}\big)\pf\big)(n-1)q\reward (\rounds-\log^2\secpar)-\big(1+\tfrac{\log\secpar}{\sqrt[4]{\rounds n}}\big)(n-1)q\pf\reward=\\
=&\big(1+\tfrac{\log\secpar}{\sqrt[4]{\rounds n}}\big)(\rounds-\log^2\secpar-1)(n-1)q\pf\reward+\log^2\secpar(n-1)q\reward.
\end{split}
\end{equation*}
Therefore, we set the upper bound for $g(Q)$ as
\begin{equation}\label{eq:bound_case_2}
\begin{split}
B&=\big(1+\tfrac{\log\secpar}{\sqrt[4]{\rounds n}}\big)(\rounds-\log^2\secpar-1)(n-1)q\pf\reward+\log^2\secpar(n-1)q\reward-\\
&\quad-\tfrac{n-1}{n}\big(1-\tfrac{\log\secpar}{\sqrt{\rounds}}\big)(\rounds-1)(1-(1-\pb)^{nq})\cost_\msf{lc}-\tfrac{n-1}{n}\rounds(\cost_\msf{fs}+\cost_\msf{tx})-\rounds (n-1)q\cost_\msf{ro}.
\end{split}
\end{equation}
For this value of $B$ and by the Chernoff bounds, we have that %
\begin{equation*}
\begin{split}
&\Pr[\umax(\exec{})\geq B]\leq\\
\leq&\Pr\big[Z^-\geq\big(1+\tfrac{\log\secpar}{\sqrt[4]{\rounds n}}\big)(r^*-1)(q+Q)\pf\big]+\negl(\secpar)\leq\negl(\secpar).
\end{split}
\end{equation*}

%

\textbf{Case 3:} $\log^2\secpar\leq r^*\leq \rounds-\log^2\secpar$. In this case, by the Chernoff bounds, we have that for $\delta\in(0,1)$
\begin{align*}
\Pr\big[Z^-\geq\big(1+\delta\big)(r^*-1)(q+Q)\pf\big]
=\negl(\secpar).\\
\Pr\big[Z^+\geq\big(1+\delta\big)(\rounds-r^*)Q\pf\big]
=\negl(\secpar).
\end{align*}
By the above, with $1-\negl(\secpar)$ probability, it holds that 
\begin{equation*}
\begin{split}&\tfrac{n-1}{n}Z^- + Z^+<\\
<&\Big(\tfrac{n-1}{n}(r^*-1)(q+Q)+(\rounds-r^*)Q\Big)(1+\delta)\pf=\\
=&\Big(\big(\tfrac{n-1}{n}(q+Q)-Q\big)r^*-\tfrac{n-1}{n}(q+Q)+\rounds Q\Big)(1+\delta)\pf.
\end{split}
\end{equation*}
Since $Q\leq(n-1)q$, it holds that $\tfrac{n-1}{n}(q+Q)-Q\geq0$. So, given that $r^*\leq\rounds$, we have that with $1-\negl(\secpar)$ probability
\begin{equation*}
\begin{split}&\tfrac{n-1}{n}Z^- + Z^+<
\tfrac{n-1}{n}(\rounds-1)(q+Q)(1+\delta)\pf.
\end{split}
\end{equation*}
Thus, by Eq.~\eqref{eq:bound_all_cases}, we set $B$ such that for every $Q$, it holds 
\begin{equation*}
\begin{split}
& B+\tfrac{n-1}{n}\big(1-\tfrac{\log\secpar}{\sqrt{\rounds}}\big)(\rounds-1)(1-(1-\pb)^{q+Q})\cost_\msf{lc}+\\
&+\tfrac{n-1}{n}\rounds(\cost_\msf{fs}+\cost_\msf{tx})+\rounds Q\cost_\msf{ro}\geq\\
&\geq(1+\delta)\tfrac{n-1}{n}(\rounds-1)(q+Q)\pf\reward\Leftrightarrow\\
\Leftrightarrow&B\geq\Big(\tfrac{n-1}{n}\big(1-\tfrac{\log\secpar}{\sqrt{\rounds}}\big)(\rounds-1)(1-\pb)^q\cost_\msf{lc}\Big)(1-\pb)^Q+\\
+&\Big((1+\delta)\tfrac{n-1}{n}(\rounds-1)\pf\reward-\rounds \cost_\msf{ro}\Big)Q+\\
+&(1+\delta)\tfrac{n-1}{n}(\rounds-1)q\pf\reward -\tfrac{n-1}{n}\big(1-\tfrac{\log\secpar}{\sqrt{\rounds}}\big)(\rounds-1)\cost_\msf{lc}-\tfrac{n-1}{n}\rounds(\cost_\msf{fs}+\cost_\msf{tx}).
\end{split}
\end{equation*}
Namely, we want to set $B$ as an upper bound of
\begin{equation*}
\begin{split}
&\Big(\tfrac{n-1}{n}\big(1-\tfrac{\log\secpar}{\sqrt{\rounds}}\big)(\rounds-1)(1-\pb)^q\cost_\msf{lc}\Big)(1-\pb)^Q+\\
+&\Big((1+\delta)\tfrac{n-1}{n}(\rounds-1)\pf\reward-\rounds \cost_\msf{ro}\Big)Q+\\
+&(1+\delta)\tfrac{n-1}{n}(\rounds-1)q\pf\reward -\tfrac{n-1}{n}\big(1-\tfrac{\log\secpar}{\sqrt{\rounds}}\big)(\rounds-1)\cost_\msf{lc}-\tfrac{n-1}{n}\rounds(\cost_\msf{fs}+\cost_\msf{tx}).
\end{split}
\end{equation*}
To do so, we study the function $h(Q)=a''\cdot x^Q+b''\cdot Q+c''$, where
\begin{align*}
x&=1-\pb\\
    a''&=\tfrac{n-1}{n}\big(1-\tfrac{\log\secpar}{\sqrt{\rounds}}\big)(\rounds-1)(1-\pb)^q\cost_\msf{lc}\\
    b''&=(1+\delta)\tfrac{n-1}{n}(\rounds-1)\pf\reward-\rounds \cost_\msf{ro}\\
    c''&=(1+\delta)\tfrac{n-1}{n}(\rounds-1)q\pf\reward-\tfrac{n-1}{n}\big(1-\tfrac{\log\secpar}{\sqrt{\rounds}}\big)(\rounds-1)\cost_\msf{lc}-\tfrac{n-1}{n}\rounds(\cost_\msf{fs}+\cost_\msf{tx})
\end{align*}
If $\pf\reward>2\cost_\msf{ro}>\tfrac{Nn}{(N-1)(n-1)}\cost_\msf{ro}$, then it is easy to see that 
\begin{equation*}
\begin{split}
b''&=(1+\delta)\tfrac{n-1}{n}(\rounds-1)\pf\reward-\rounds \cost_\msf{ro}>0.
\end{split}    
\end{equation*}
In order to find the maximum of $h(Q)$ for $Q\in[0,(n-1)q]$, we compute
\begin{equation*}
\begin{split}
&h'(Q)=0\Rightarrow  a''\cdot\ln x\cdot x^Q+b''=0\Rightarrow Q=\dfrac{\ln\big(\frac{b''}{a''\cdot\ln (1/x)}\big)}{\ln x}
\end{split}    
\end{equation*}
Just like function $f$ in Case 1, we can conclude that the maximum of $h$ in $[0,(n-1)q]$ is $(n-1)q$. Thus, we have that
\begin{equation*}
\begin{split}
h(Q)&\leq\Big(\tfrac{n-1}{n}\big(1-\tfrac{\log\secpar}{\sqrt{\rounds}}\big)(\rounds-1)(1-\pb)^q\cost_\msf{lc}\Big)(1-\pb)^{(n-1)q}+\\
&\quad+\Big((1+\delta)\tfrac{n-1}{n}(\rounds-1)\pf\reward-\rounds \cost_\msf{ro}\Big)(n-1)q+\\
&\quad+(1+\delta)\tfrac{n-1}{n}(\rounds-1)q\pf\reward -\tfrac{n-1}{n}\big(1-\tfrac{\log\secpar}{\sqrt{\rounds}}\big)(\rounds-1)\cost_\msf{lc}-\tfrac{n-1}{n}\rounds(\cost_\msf{fs}+\cost_\msf{tx})=\\
&=(1+\delta)(\rounds-1)(n-1)q\pf\reward-\\
&\quad-\tfrac{n-1}{n}\big(1-\tfrac{\log\secpar}{\sqrt{\rounds}}\big)(\rounds-1)(1-(1-\pb)^{nq})\cost_\msf{lc}-\tfrac{n-1}{n}\rounds(\cost_\msf{fs}+\cost_\msf{tx})-\rounds (n-1)q\cost_\msf{ro}.
\end{split}
\end{equation*}
Therefore, we set the upper bound for $h(Q)$ as
\begin{equation}\label{eq:bound_case_3}
\begin{split}
B&=(1+\delta)(\rounds-1)(n-1)q\pf\reward-\\
&\quad-\tfrac{n-1}{n}\big(1-\tfrac{\log\secpar}{\sqrt{\rounds}}\big)(\rounds-1)(1-(1-\pb)^{nq})\cost_\msf{lc}-\tfrac{n-1}{n}\rounds(\cost_\msf{fs}+\cost_\msf{tx})-\rounds (n-1)q\cost_\msf{ro}.
\end{split}
\end{equation}
For this value of $B$, we get $\Pr[\umax(\exec{})\geq B]=\negl(\secpar).$
\smallskip

Given Cases 1,2, and 3, we provide a final bound that dominates all three upper bounds in Eq.~\eqref{eq:bound_case_1},~\eqref{eq:bound_case_2}, and~\eqref{eq:bound_case_3}, respectively. In particular, for any $\delta\in \big[\tfrac{\log\secpar}{\sqrt[4]{\rounds n}},1\big)$, we set
\begin{equation}\label{eq:final_bound}
\begin{split}
B&=(1+\delta)(\rounds-1)(n-1)q\pf\reward+\log^2\secpar(n-1)q\reward-\\
&\quad-\tfrac{n-1}{n}\big(1-\tfrac{\log\secpar}{\sqrt{\rounds}}\big)(\rounds-1)(1-(1-\pb)^{nq})\cost_\msf{lc}-\tfrac{n-1}{n}\rounds(\cost_\msf{fs}+\cost_\msf{tx})-\rounds (n-1)q\cost_\msf{ro}.
\end{split}
\end{equation}
Clearly, the above bound dominates the ones in Eq.~\eqref{eq:bound_case_1},~\eqref{eq:bound_case_2}, and and~\eqref{eq:bound_case_3}. Thus, for this value of $B$, we conclude that %
\begin{equation*}
\begin{split}
\Pr[\umax(\exec{})\leq B]\geq1-\negl(\secpar).
\end{split}
\end{equation*}
Recall that the analysis so far was given that $Q\geq \sqrt{n}q$. To complete the proof of the claim, we will show that for $Q<\sqrt{n}q$, the profit of the coalition $\corrupt$ cannot exceed the bound in Eq.~\eqref{eq:final_bound}, except with $\negl(\secpar)$ probability. 

If $Q<\sqrt{n}q$, then all the parties make less than $(\sqrt{n}+1)q$ random oracle queries in total per round. Let $Z$ be the number of fruits mined during the execution and $\tilde{Z}$ a random variable that follows $\msf{Bin}(\rounds(\sqrt{n}+1)q,\pf)$. By the Chernoff bounds,
\begin{equation*}
\begin{split}
\Pr\big[Z\geq\big(1+\tfrac{\log\secpar}{\sqrt[4]{\rounds n}}\big)\rounds(\sqrt{n}+1)q\pf\big]
=\negl(\secpar).  
\end{split}    
\end{equation*}
The latter implies that with $1-\negl(\secpar)$ probability the total rewards, that are clearly greater than the profit of the coalition, are no more than $\big(1+\tfrac{\log\secpar}{\sqrt[4]{\rounds n}}\big)\rounds(\sqrt{n}+1)q\pf\reward$. 

We show that if $\pf\reward>3\big(\tfrac{\cost_\msf{lc}+\cost_\msf{fs}+\cost_\msf{tx}}{(n-1)q}+\cost_\msf{ro}\big)$, then it holds that for any $\delta\in \big[\tfrac{\log\secpar}{\sqrt[4]{\rounds n}},1\big)$,
\begin{equation}\label{eq:less_nq}
\begin{split}
&\big(1+\tfrac{\log\secpar}{\sqrt[4]{\rounds n}}\big)\rounds(\sqrt{n}+1)q\pf\reward<\\
<&(1+\delta)(\rounds-1)(n-1)q\pf\reward+\log^2\secpar(n-1)q\reward-\\
&-\tfrac{n-1}{n}\big(1-\tfrac{\log\secpar}{\sqrt{\rounds}}\big)(\rounds-1)(1-(1-\pb)^{nq})\cost_\msf{lc}-\tfrac{n-1}{n}\rounds(\cost_\msf{fs}+\cost_\msf{tx})-\rounds (n-1)q\cost_\msf{ro}.
\end{split}
\end{equation}
Namely, since $\big(1+\tfrac{\log\secpar}{\sqrt[4]{\rounds n}}\big)\rounds(\sqrt{n}+1)<\tfrac{1}{2}(\rounds-1)(n-1)$ for typical values of $\rounds,n$, it holds that
\[\big(1+\tfrac{\log\secpar}{\sqrt[4]{\rounds n}}\big)\rounds(\sqrt{n}+1)q\pf\reward<\tfrac{1}{2}(1+\delta)(\rounds-1)(n-1)q\pf\reward.\]
Besides, if $\pf\reward>3\big(\tfrac{\cost_\msf{lc}+\cost_\msf{fs}+\cost_\msf{tx}}{(n-1)q}+\cost_\msf{ro}\big)$, then
\begin{equation*}
\begin{split}
&(1+\delta)(\rounds-1)(n-1)q\pf\reward+\log^2\secpar(n-1)q\reward-\\
&-\tfrac{n-1}{n}\big(1-\tfrac{\log\secpar}{\sqrt{\rounds}}\big)(\rounds-1)(1-(1-\pb)^{nq})\cost_\msf{lc}-\\
&-\tfrac{n-1}{n}\rounds(\cost_\msf{fs}+\cost_\msf{tx})-\rounds (n-1)q\cost_\msf{ro}>\\
>&(\rounds-1)(n-1)q\pf\reward-\rounds\cost_\msf{lc}-\rounds(\cost_\msf{fs}+\cost_\msf{tx})-\rounds (n-1)q\cost_\msf{ro}=\\
=&(\rounds-1)(n-1)q\pf\reward-\rounds(n-1)q\big(\tfrac{\cost_\msf{lc}+\cost_\msf{fs}+\cost_\msf{tx}}{(n-1)q}+\cost_\msf{ro}\big)>\\
>&(\rounds-1)(n-1)q\pf\reward-\tfrac{1}{3}\rounds(n-1)q\pf\reward>\\
>&(\rounds-1)(n-1)q\pf\reward-\tfrac{1}{2}(\rounds-1)(n-1)q\pf\reward=\\
=&\tfrac{1}{2}(\rounds-1)(n-1)q\pf\reward.
\end{split}
\end{equation*}
By the above, we get Eq.~\eqref{eq:less_nq}, which completes the proof of the claim.
\hfill$\dashv$
%
%
\begin{claim}\label{claim:other_deviations}
For every   $\mc{O}_\msf{tx}$-respecting adversary $\mc{A}$ that performs a combination of deviations D1, D4-D7, D9-D12 and every $\rounds$-admissible environment $\mc{Z}$ that activates the pool leader first in each round, it holds $\umax(\exec{}) \leq \umin(\honexec)$ with $1-\negl(\secpar)$ probability.
\end{claim}
\textit{Proof of Claim~\ref{claim:other_deviations}. }
D1 is captured by D6 due to step (2) (cf. Figure~\ref{fig:single_leader}) and step (2a) (cf. Figure~\ref{fig:single_other}) in the $\single$ protocol for the leader and the non leader, respectively. In more detail, if a party in $\corrupt$ does not update $inst_i$, $i\in[4]$ as instructed by $\single$ and sends its inconsistent fruits and/or blocks, then during the next round the honest parties following $\single$ will dissolve the pool. This happens because the honest parties will detect the deviation. Thus, the outcome of D1  can be captured by D6 where at some round $\mc{A}$ instructs a subset of the parties in $\mathbf{C}$ to abandon the pool. 

D12 is captured by D6 for the case where all the corrupted parties abandon the pool and follow $\protocolevp$ protocol. This happens because if the leader does not pay a non leader party, then this party will detect this via steps (2b) and (3), it will leave the pool and it will follow $\protocolevp$.  

D7 is not effective in our setting, because the parties that are not corrupted by $\mc{A}$  follow the $\single$ protocol and thus they will never produce a block $\hat{\block}:=\langle\langle \hat{h}_{-1},\hat{h}_f,\hat{\nonce},$ $\hat{\msf{dig}},\hat{\record},\hat{h}\rangle,\hat{\mbf{F}}\rangle$ or a fruit $\langle \hat{h}_{-1},\hat{h}_f,\hat{\nonce},\hat{\msf{dig}},\hat{\record},\hat{h}\rangle$ so that $(\hat{\previous},\hat{h}_f,\hat{\msf{dig}},\hat{\record})\neq(inst_1,inst_2,inst_3,inst_4)$. 

D10 is not performed by an $\mc{O}_\tx$-respecting adversary. 

D6 is captured by any combination of D2-D5 and D7-D12:  let us assume that a subset of the corrupted parties abandons the pool and creates a new pool following different instructions from the $\single$ pool. Recall that the utility of the adversary is the sum of the utilities of all the corrupted parties. Thus, the way of sharing the rewards among the corrupted parties does not affect the utility of the adversary.   

D9, D11 have the same effect as D3 for the case where $\mc{A}$ instructs all the corrupted parties to abstain by asking no queries to the random oracle $\mc{O}_\msf{ro}$. The reason is that if the pool leader does not ask the oracles $\mc{O}_\msf{fs}$, $\mc{O}_\msf{lc}$, it cannot create $inst_i$, $i\in[4]$ needed for all the parties to ask the random oracle and produce valid fruits that will give the rewards to the pool when a block is produced.  

D4 will offer to the adversary lower utility than D3 for the case where $\mc{A}$ instructs all the corrupted parties to abstain by asking no queries the random oracle $\mc{O}_\msf{ro}$. The reason that is if the adversary asks the random oracle but does not send its fruits or blocks, it incurs the cost of $C_\msf{ro}$ without getting any more rewards from the fruits it produces. 

Regarding D5, we do not consider deviations that either hinge on the assumption that the blocks can include unlimited number of fruits and/or they demand that the adversary is aware of the round when $\mc{Z}$ will terminate the execution.   
\hfill$\dashv$
\\

Since $\pf\reward>\tfrac{\cost_\msf{lc}+\cost_\msf{fs}+\cost_\msf{tx}}{(1-\frac{\log\secpar}{\sqrt[4]{n}}) \sqrt{n}q}+\cost_\msf{ro}$, $\pb=\Omega(\frac{1}{nq})$, and $\pf<\tfrac{1}{2}$, for an $\mc{O}_\msf{tx}$-respecting adversary $\mc{A}$, the conditions for all Claims~\ref{claim:H},~\ref{claim:D2_D3_D8}, and~\ref{claim:other_deviations} hold. Therefore, for any  $\mc{O}_\msf{tx}$-respecting adversary $\mc{A}$ and $\delta\in \big[\tfrac{\log\secpar}{\sqrt[4]{\rounds n}},1\big)$, with $1-\negl(\secpar)$ probability, it holds that
\begin{equation*}
\begin{split}
&\umax(\exec{})-\umin(\honexec)\leq\\
%
%
\leq&\Big(\big(\tfrac{\log\secpar}{\sqrt{\rounds n}}+\delta\big)\rounds+\log^2\secpar\big(1+\tfrac{1}{\pf}\big)-\big(\tfrac{\log^3\secpar}{\sqrt{\rounds n}}+1+\delta\big)\Big)(n-1)q\pf\reward+\\
&+\tfrac{n-1}{n}\Big((2\log\secpar)\sqrt{\rounds}+1-\tfrac{\log\secpar}{\sqrt{\rounds}}\Big)(1-(1-\pb)^{nq})\cost_\msf{lc}+\\
&+\big(1+\tfrac{\log\secpar}{\sqrt{\rounds}}\big)\rounds(1-(1-\pb)^{nq})(n-1)\cost_\msf{ltx}+\tfrac{\log^2\secpar}{n}(\cost_\msf{fs}+\cost_\msf{tx}).
\end{split}    
\end{equation*}

Thus, according to Definition~\ref{def:EVP}, the $\single$ protocol is $(n-1,0,\epsilon')$-EVP, for $\epsilon'$ as in theorem statement.

\end{proof}

\begin{remark}\label{rem:variant}
If we remove the assumption of Theorem \ref{th:equilibrium}, then we can prove that instead of the $\single$ protocol, the following strategy profile, denoted by $\mathcal{S}$, is EVP according to the utility profit: all the parties follow all the instructions of the $\single$ protocol except that:
\begin{enumerate} 
\item  the pool leader ignores step (7),(9) for all the rounds. \item in step (8), if the round is a \textit{payment round}, the pool leader sets $inst_4\leftarrow \tx_T$, where $\tx_T$ is the special transaction with the payments, otherwise it does nothing.
\item the members in step (3) do not add $C_\msf{tx}$ in cost.
\end{enumerate}
\end{remark}

\section{Discussion}\label{sec:discussion}

 We believe that proposals such as Smartpool~\cite{10.5555/3241189.3241299} that give the transaction verification back to the miners, and \cite{236314} where the miners validate the transactions, have similar tendency to centralization; like in the FruitChain system, the miners in~\cite{10.5555/3241189.3241299,236314} have incentives to collude in order to share the transaction verification costs. Note that as  \cite{Azouvi2021SoK} states, any decentralised system can be executed in a centralised manner. Thus, the fact that a system is designed so that miners can process the transactions in a decentralised manner does not imply that they have incentives to do so.
\par As our results indicate, apart from reducing the variance of the rewards, further research is needed to incentivize decentralization in PoW protocols. One possible research direction is to disincentivize parallel mining (e.g., \cite{9136680,10.1145/2810103.2813621}).
\par In more detail, Miller et al. \cite{10.1145/2810103.2813621} propose a non-outsourceable puzzle that does not allow miners in a pool to provide the leader with a proof that they indeed mine for the pool. This could constitute a possible countermeasure for PoW systems where centralization is motivated by sharing transaction verification costs. The reason is that this non-outsourceable puzzle could render abandoning the pool more profitable than sticking to the pool and share the costs. However, such countermeasure seems incompatible with any blockchain protocol that, like FruitChain, uses the 2-for-1 PoW technique \cite{backbone} to reduce the variance of the rewards and mitigate selfish mining attacks. This is because the ``easier'' puzzle used in the 2-for-1 PoW technique  can serve as proof of mining in a pool.
\par  To overcome the above incompatibility, the design challenge is to construct PoW puzzles that disincentivize the formation of pools, while being applicable to blockchain protocols that satisfy fairness~\cite{fruitchain}.
One candidate solution to this direction could be built upon the PoW puzzle in \cite{9136680} that (i) is non-parallelizable, (i.e., it is computed by the miners serially), and (ii) is used in a consensus mechanism that guarantees fairness.

Note that in order to eliminate the problem of centralization, it is necessary that the miners do not use the same set of transactions in their puzzles. Else, they will still have incentives to collude and share the transaction verification cost. 

\paragraph{Acknowledgements.}
Zacharias was supported by Input Output (\url{https://iohk.io}) through their funding of the Edinburgh Blockchain Technology Lab. Part of this work was conducted while Stouka was a research associate at the Edinburgh Blockchain Technology Lab.

\bibliographystyle{plain}
\bibliography{references}
\appendix

\renewcommand{\thesection}{\Alph{section}}
\section{Chernoff bounds}\label{app:chernoff}
 Let $X_1,X_2,\ldots,X_K$ be independent random variables, where $X_i=1$ with probability $p_i$ and $X_i=0$ with probability $(1-p_i)$. Let $X=\sum_{i=1}^KX_i$ so $E[X]=\sum_{i=1}^Kp_i$. We apply the Chernoff bounds in the following form: For any $0<\delta<1$,
\begin{align*}
\Pr[X\geq(1+\delta)E[X]]&\leq e^{-\frac{\delta^2}{3}E[X]}\\
\Pr[X\leq(1-\delta)E[X]]&\leq e^{-\frac{\delta^2}{2}E[X]}
\end{align*}

\end{document}